\newtheorem{mydefP}{T{\'o}th's Proposition}
\newtheorem{mydefL}{T{\'o}th's Lemma}
\newtheorem{mydefC}{T{\'o}th's Claim}
\begin{document}

\pagestyle{headings}  

\title{Art Gallery Localization\thanks{This work was supported by NSERC.}}
\titlerunning{Art Gallery Localization}  
%
\author{Prosenjit Bose\inst{1} \and Jean-Lou De Carufel\inst{2} \and Alina Shaikhet\inst{1} \and Michiel Smid\inst{1}}
\authorrunning{P. Bose, J.-L. De Carufel, A. Shaikhet and M. Smid} 

\institute{School of Computer Science, Carleton University, Ottawa, Canada,\\
\email{\{jit, michiel\}@scs.carleton.ca, alina.shaikhet@carleton.ca},\\
\and
School of Electrical Engineering and Computer Science, U. of Ottawa, Canada,\\
\email{jdecaruf@uottawa.ca}}

\maketitle              

\begin{abstract}
We study the problem of placing a set $T$ of broadcast towers in a simple polygon $P$ in order for any point to locate itself in the interior of $P$. Let $V(p)$ denote the \emph{visibility polygon} of a point $p$, as the set of all points $q \in P$ that are visible to $p$. For any point $p \in P$: for each tower $t \in T \cap V(p)$ the point $p$ receives the coordinates of $t$ and the exact Euclidean distance between $t$ and $p$. From this information $p$ can determine its coordinates. We show a tower-positioning algorithm that computes such a set $T$ of size at most $\lfloor 2n/3\rfloor$, where $n$ is the size of $P$. This improves the previous upper bound of $\lfloor 8n/9\rfloor$ towers~\cite{DBLP:conf/cccg/DippelS15}. We also show that $\lfloor 2n/3\rfloor$ towers are sometimes necessary.
\keywords{art gallery, trilateration, GPS, polygon partition, localization}
\end{abstract}

\section{Introduction}
\label{sec:introduction}
\pdfbookmark[1]{Introduction}{sec:introduction}

The art gallery problem was introduced in 1973 when
Victor Klee asked how many guards are sufficient to \emph{guard} the interior of a simple polygon having $n$ vertices. Although it has been shown by Chv{\'a}tal that $\lfloor n/3\rfloor$ guards are always sufficient and sometimes necessary~\cite{Chvatal197539}, and such a set of guards can be computed easily~\cite{Fisk1978374}, such solutions are usually far from optimal in terms of minimizing the number of guards for a particular input polygon. Moreover, it was shown that determining an optimal number of guards is NP-hard, even for simple polygons~\cite{Lee:1986:CCA:13643.13657}. Determining the actual locations of those guards is even harder~\cite{DBLP:conf/compgeom/AbrahamsenAM17}.

\emph{Trilateration} is the process of determining absolute or relative locations of points by measurement of distances, using the geometry of the environment. In addition to its interest as a geometric problem, trilateration has practical applications in surveying and navigation, including global positioning systems (GPS). Every GPS satellite transmits information about its position and the current time at regular intervals. These signals are intercepted by a GPS receiver, which calculates how far away each satellite is based on how long it took for the messages to arrive. GPS receivers take this information and use trilateration to calculate the user's location.

In our research we combine the art gallery problem with trilateration. We address the problem of placing broadcast \emph{towers} in a simple polygon $P$ in order for a point in $P$ (let us call it an \emph{agent}) to locate itself. Towers can be defined as points, which can transmit their coordinates together with a time stamp to other points in their visibility region. The agent receives messages from all the towers that belong to its visibility region. Given a message from the tower $t$, the agent can determine its distance to $t$. In our context, \emph{trilateration} is the process, during which the agent can determine its absolute coordinates from the messages the agent receives. Receiving a message from one tower only will not be sufficient for the agent to locate itself (unless the agent and the tower are at the the same location). In Euclidean plane, two distinct circles intersect in at most two points. If a point lies on two circles, then the circle centers and the two radii provide sufficient information to narrow the possible locations down to two. Additional information may narrow the possibilities down to one unique location.

In relation to GPS systems, towers can be viewed as GPS satellites, while agents (query points interior to the polygon) can be compared to GPS receivers. Naturally, we would like to minimize the number of towers. 

Let $P$ be a simple polygon in general position (no three vertices are collinear) having a total of $n$ vertices on its boundary (denoted by $\partial P$). Note that $\partial P \subset P$. Two points $u, v \in P$ are \emph{visible to each other} if the segment $\overline{uv}$ is contained in $P$. We also say that $u$ \emph{sees} $v$. Note that $\overline{uv}$ may touch $\partial P$ in one or more points. For $u \in P$, we let $V(u)$ denote the \emph{visibility polygon} of $u$, as the set of all points $q \in P$ that are visible to $u$. Notice that $V(u)$ is a star-shaped polygon contained in $P$ and $u$ belongs to its \emph{kernel} (the set of points from which all of $V(u)$ is visible).

\vspace{10pt}
\textbf{Problem Definition:} Let $T$ be a set of points (called \emph{towers}) in $P$ satisfying the following properties. 
For any point $p \in P$: for each $t \in T \cap V(p)$, the point $p$ receives the coordinates of $t$ and can compute the Euclidean distance between $t$ and $p$, denoted $d(t,p)$. From this information, $p$ can determine its coordinates. We consider the following problems:
\begin{enumerate}
\item Design an algorithm that, on any input polygon $P$ in general position, computes a ``small'' set $T$ of towers.
\item Design a localization algorithm. 
\end{enumerate}

We show how to compute such a set $T$ of size at most $\lfloor 2n/3\rfloor$ by using the polygon partition method introduced by T{\'o}th~\cite{Toth2000121}. T{\'o}th showed that any simple polygon with $n$ vertices can be guarded by $\lfloor n/3\rfloor$ point guards whose range of vision is $180^\circ$. T{\'o}th partitions a polygon into subpolygons, on which he then can apply induction. He cuts along diagonals whenever it is possible, otherwise he cuts along a continuation of some edge of $P$; along a two-line segment made of an extension of two edges of $P$ that intersect inside $P$; or along the bisector of a reflex vertex of $P$. Notice that the three latter types of cuts may introduce new vertices that are not necessarily in general position with the given set of vertices. Succeeding partitions of the subpolygons may create polygons that are not simple. However, T{\'o}th assumed that his partition method creates subpolygons whose vertices are in general position (refer to~\cite{Toth2000121} Section $2$). We lift this assumption and show how to adapt his method to a wider range of polygons, which we define in Section~\ref{sec:partition}. Under the new conditions the partition of $P$ may contain star-shaped polygons whose kernel is a single point. It does not pose an obstacle to T{\'o}th's problem, but it is a severe complication to our problem, because we require a pair of distinct towers in the kernel of each polygon of the partition. We modify T{\'o}th's partition method and show how to use it with respect to our problem. It is important to notice that we assume that the input polygon is in general position, while non-general position may occur for subpolygons of the partition (refer to Definition~\ref{def:polygon}). We show that after the modification each $180^\circ$-guard $g$ can be replaced with a pair of towers \textbf{close} to $g$. We embed the orientation of the $180^\circ$-guard into the coordinates of the towers. That is, we specify to which side of the line $L$ through the pair of towers their primary localization region resides. We do it by positioning the towers at a distance that is an exact rational number. The parity of the numerator of the rational number (in the reduced form) defines which one of the two half-planes (defined by $L$) the pair of towers are responsible for. We call it the \emph{parity trick}. For example, if we want a pair $t_1$, $t_2$ of towers to be responsible for the half-plane to the left of the line through $t_1$ and $t_2$, then we position the towers at a distance, which is a reduced rational number whose numerator is even. The localization algorithm is allowed to use this information.

Our interest in this problem started with the paper by Dippel and Sundaram~\cite{DBLP:conf/cccg/DippelS15}. 
 They provide the first non-trivial bounds on agent localization in simple polygons, by showing that $\lfloor 8n/9\rfloor$ towers suffice for any non-degenerate polygon of $n$ vertices, and present an $O(n \log n)$ algorithm for the corresponding placement. Their approach is to decompose the polygon into at most $\lfloor n/3\rfloor$ fans. A polygon $P'$ is a \emph{fan} if there exist a vertex $u$, such that for every other vertex $v$ not adjacent to $u$, $\overline{u v}$ is a  diagonal of $P'$; the vertex $u$ is called the \emph{center} of the fan. In each fan with fewer than $4$ triangles Dippel and Sundaram position a pair of towers on an edge of the fan; every fan with $4$ or more triangles receives a triple of towers in its kernel. In a classical trilateration, the algorithm for locating an agent knows the coordinates of the towers that can see $p$ together with distances between $p$ and the corresponding towers. However, the localization algorithm presented in~\cite{DBLP:conf/cccg/DippelS15} requires a lot of additional information, such as a complete information about the polygon, its decomposition into fans and the coordinates of \textbf{all} towers.

Our localization algorithm has no information about $P$. It receives as input only the coordinates of the towers that can see $p$ together with their distances to $p$. In addition our algorithm is empowered by the knowledge of the parity trick. 
When only a pair $t_1$, $t_2$ of towers can see $p$ then the coordinates of the towers together with the distances $d(t_1,p)$ and $d(t_2,p)$ provide sufficient information to narrow the possible locations of $p$ down to two. Refer to Figures~\ref{fig:example2},\ref{fig:example3}. Those two locations are reflections of each other over the line through $t_1$ and $t_2$. In this situation our localization algorithm uses the parity trick. It calculates the distance between the two towers and judging by the parity of this number decides which of the two possible locations is the correct position of $p$.

We show how to position at most $\lfloor 2n/3\rfloor$ towers inside $P$, which is an improvement over the previous upper bound in~\cite{DBLP:conf/cccg/DippelS15}. We also show that $\lfloor 2n/3\rfloor$ towers are sometimes necessary. The comb polygon from the original art gallery problem can be used to show a lower bound. Refer to Fig.~\ref{fig:comb}. No point in the comb can see two different comb spikes. Thus we need at least two towers per spike to localize all of the points in its interior. In addition we need to know the parity trick. Or, alternatively, we need to know $P$, its exact location and orientation. We show in Theorem~\ref{thm:no_map} that without any additional information (such as the parity trick or the complete knowledge about $P$ including its partition) it is \textbf{not} possible to localize an agent in a simple $n$-gon (where $n = 3k + q$, for integer $k \geq 1$ and $q = 0$, $1$ or $2$) with less than $n - q$ towers. 

In Section~\ref{sec:preliminaries} we give basic definitions and present some properties and observations. Section~\ref{sec:partition} shows some of our modifications of the polygon partition given by T{\'o}th~\cite{Toth2000121} and its adaptation to our problem. In Section~\ref{sec:localization} we present a localization algorithm.

\section{Preliminaries}
\label{sec:preliminaries}
\pdfbookmark[1]{Preliminaries}{sec:preliminaries}

Consider a point (an agent) $p$ in the interior of $P$, whose location is unknown. Let $C(x,r)$ denote the circle centered at a point $x$ with radius $r$. If only one tower $t$ can see $p$ then $p$ can be anywhere on $C(t,d(p,t)) \cap V(t)$ (refer to Fig.~\ref{fig:example1}), which may not be enough to identify the location of~$p$. 
By the \emph{map} of $P$ we mean the complete information about $P$ including the coordinates of all the vertices of $P$ and the vertex adjacency list. Notice that one must know the map of $P$ to calculate $V(t)$. 
If two towers $t_1$ and $t_2$ can see $p$ then the location of $p$ can be narrowed down to at most two points $C(t_1,d(p,t_1)) \cap C(t_2,d(p,t_2)) \cap V(t_1) \cap V(t_2)$. Refer to Fig.~\ref{fig:example2}. The two points are reflections of each other over the line through $t_1$ and $t_2$. We call this situation the \emph{ambiguity along the line}, because without any additional information we do not know to which one of the two locations $p$ belongs. To avoid this uncertainty we can place both towers on the same edge of $P$. Consider for example Fig.~\ref{fig:example3} where two towers are placed on the line segment $kernel(P) \cap \partial P$. In this example, if the map of $P$ is known (and thus we know $V(t_1)$ and $V(t_2)$) then the intersection $C(t_1,d(p,t_1)) \cap C(t_2,d(p,t_2)) \cap V(t_1) \cap V(t_2)$ is a single point (highlighted in red). Alternatively, if the map of $P$ is unknown, we can place a triple of non-collinear towers in the kernel of $P$ (highlighted in cyan on Fig.~\ref{fig:example3}) to localize any point interior to $P$.

\begin{figure}[h]
\centering
\subfigure[]{%
		\label{fig:example1}%
		\includegraphics[width=0.22\textwidth]{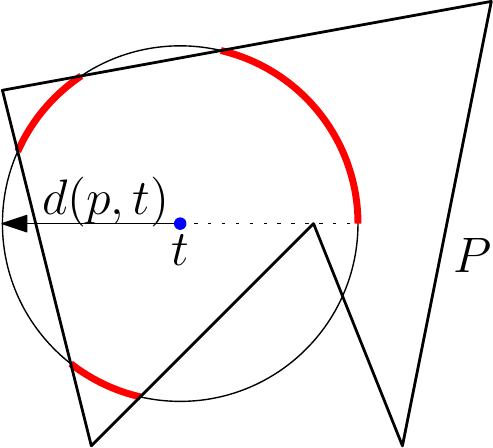}}
\hspace{0.1\textwidth}
\subfigure[]{%
		\label{fig:example2}%
		\includegraphics[width=0.22\textwidth]{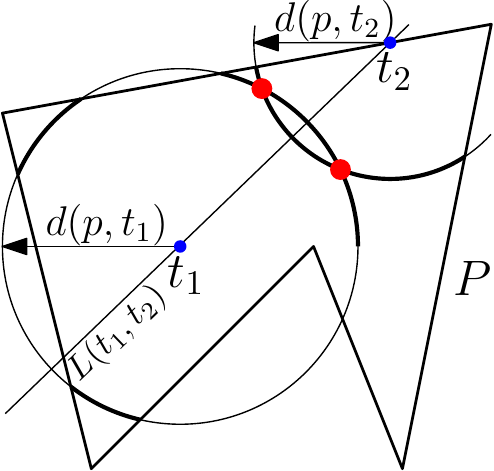}}
\hspace{0.1\textwidth}
\subfigure[]{%
		\label{fig:example3}%
		\includegraphics[width=0.22\textwidth]{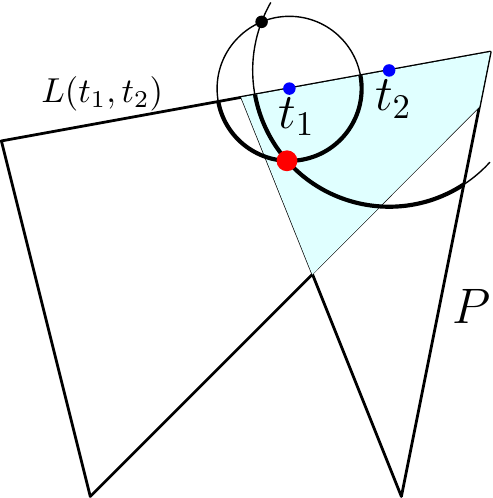}}
\caption{Trilateration example. \textbf{(a)} The point $p$ can be anywhere on $C(t,d(p,t)) \cap V(t)$, which is highlighted in red. \textbf{(b)} Ambiguity along the line $L(t_1, t_2)$. \textbf{(c)} If the map of $P$ is known then the location of $p$ can be identified precisely. The kernel of $P$ is highlighted in cyan.}
\label{fig:example}
\end{figure}

For a simple polygon $P$ in general position, we can partition it into star-shaped polygons $P_1, P_2, \ldots P_l$ such that $kernel(P_i)$, for every $1 \leq i \leq l$, does not degenerate into a single point. In every $P_i$ ($1 \leq i \leq l$) we can position a pair of towers on a line segment in $kernel(P_i) \cap \partial P_i$ (such that the towers belong to the same edge of $P_i$) or a triple of towers in $kernel(P_i)$ if $kernel(P_i) \cap \partial P_i$ is empty or contains a single point. Notice that a pair of towers positioned on the edge of $P_i$ will not necessarily be on the boundary of $P$. Thus, to localize an agent, it is not enough to know the map of $P$. We need to know more, for example, if in addition to the map of $P$ we know the partition of $P$ into star-shaped polygons and which pair of towers is responsible for which subpolygon then the agent can be localized.

In our solution we do not use this extra information or the map of $P$. Moreover, to get a tight bound of $\lfloor 2n/3\rfloor$ towers, we abstain from placing a triple of towers per subpolygon, since some polygons cannot be partitioned into less than $\lfloor n/3\rfloor$ star-shaped subpolygons. The idea is to use a \emph{parity trick}.

\vspace{10pt}
\textbf{Parity trick:} Let $L(u, v)$ be the line through points $u$ and $v$. Let $L(u, v)^+$ denote the half plane to the left of $L(u,v)$ (or above, if $L$ is horizontal). Similarly, $L(u, v)^-$ denotes the half plane to the right (or below) of $L$. We embed information about the primary orientation of the pair of towers into their coordinates. If we want a pair $t_1$, $t_2$ of towers to be responsible for $L(t_1, t_2 )^+$ (respectively $L(t_1, t_2 )^-$), then we position the towers at a distance which is a reduced rational number whose numerator is even (respectively odd). In this way, we specify on which side of $L(t_1 , t_2 )$ the primary localization region of $t_1$ and $t_2$ resides. Refer to Sect~\ref{subsec:PartitionAlg} where the parity trick is explained in greater detail.

To achieve localization with at most $\lfloor 2n/3\rfloor$ towers we should partition $P$ into at most $\lfloor n/3\rfloor$ star-shaped polygons $P_1,\ldots, P_l$ such that there exists a line segment $\overline{uv} \in kernel(P_i) \cap \partial P_i$ such that $P_i \in L(u, v )^+$ or $P_i \in L(u, v )^-$ for every $1 \leq i \leq l$ (we assume that $u$ and $v$ are distinct points).

\begin{theorem}[Chv{\'a}tal's Theorem~\cite{Chvatal197539}]
\label{thm:chvatal}
Every triangulation of a polygon with $n$ vertices can be partitioned into $m$ fans where $m \leq \lfloor n/3\rfloor$.
\end{theorem}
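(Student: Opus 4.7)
The plan is to produce the partition via a Fisk-style 3-coloring of the triangulation.

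First, I would establish that the vertices of the triangulation admit a proper 3-coloring: an assignment of colors from $\{1, 2, 3\}$ to the $n$ vertices so that every triangle of the triangulation has all three colors. This is proved by induction on $n$; since every triangulation of a simple polygon contains an ear $v_{i-1} v_i v_{i+1}$ (in fact at least two, by Meisters' two-ears theorem), I remove the ear tip $v_i$, 3-color the resulting triangulated $(n-1)$-gon by induction, and then color $v_i$ with the unique color not used at $v_{i-1}$ or $v_{i+1}$.

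Next, by pigeonhole, at least one color class $V_c$ satisfies $|V_c| \leq \lfloor n/3 \rfloor$; I fix such a smallest class. For each $v \in V_c$, let $F_v$ denote the set of triangles of the triangulation that contain $v$. Because every triangle contains exactly one vertex of color $c$, the collection $\{F_v : v \in V_c\}$ partitions the triangles of the triangulation, and hence (viewing each $F_v$ as the union of its triangles) partitions $P$ into interior-disjoint sub-polygons.

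Finally I would verify that each $F_v$ is a fan in the sense of the paper. The triangles in $F_v$ are precisely those incident to $v$; they are arranged in angular order around $v$, and their union is a simply-connected sub-polygon whose corner at $v$ is joined to every other vertex of the sub-polygon by either a polygon edge or a diagonal of the triangulation. Thus $v$ serves as a fan center, and the triangulation is decomposed into at most $|V_c| \leq \lfloor n/3 \rfloor$ fans, which gives the desired bound with $m = |V_c|$. The step that deserves the most care is checking that the union of triangles sharing a common polygon vertex genuinely bounds a fan (i.e., that the designated center sees every other vertex via an edge or a diagonal of the sub-polygon, rather than through the interior of another triangle), but this follows at once from the local incidence structure of a triangulation at a boundary vertex.
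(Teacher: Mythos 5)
The paper offers no proof of this statement at all: it is imported as a black box with a citation to Chv\'atal, so there is no internal argument to compare against. Your proposal is correct, and it is the standard Fisk-style proof rather than Chv\'atal's original one. The 3-coloring exists by ear-removal induction (the ear tip's two neighbours share a triangle of the reduced triangulation, hence get distinct colors, so the third color is forced and available); the smallest color class has at most $\lfloor n/3\rfloor$ vertices by pigeonhole; and since each triangle contains exactly one vertex of the chosen color, grouping triangles by that vertex partitions the triangulation. The step you rightly single out --- that the triangles of $\mathcal{T}$ incident to a vertex $v$ really form a fan centered at $v$ --- holds because the edges of the triangulation at $v$ sweep the interior angle from one boundary neighbour $u_0$ of $v$ to the other $u_k$, and every face between two angularly consecutive such edges is a triangle containing $v$; hence the star of $v$ is the polygon $v u_0 u_1 \cdots u_k$ in which $v$ is joined to each $u_i$ by an edge or a diagonal, which is exactly the paper's definition of a fan with center $v$. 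By contrast, Chv\'atal's own proof in the cited source is a direct induction on $n$ with a case analysis on a carefully chosen diagonal and on how fan decompositions of the pieces merge. The coloring route is shorter, immediately yields a fan partition subordinate to \emph{any} prescribed triangulation (which is the form of the statement actually used here), and matches the other reference the paper gives for computability of the guard set; its only cost is that it gives no control over \emph{which} vertices become fan centers, which Chv\'atal's induction can be tweaked to provide.
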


The statement of the following lemma may seem trivial, still we provide its proof for completeness. 

\begin{lemma}
\label{lem:pentagon}
Any simple polygon $P$ with $3$, $4$ or $5$ sides is star-shaped and its kernel contains a boundary segment that is not a single point.
\end{lemma}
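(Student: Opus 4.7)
The plan is to proceed by case analysis on $n$ and, within each $n$, on the number $k$ of reflex vertices of $P$. For $n=3$, $P$ is trivially convex, so $\mathrm{kernel}(P)=P$ and every edge is a non-degenerate boundary segment in the kernel. For $n=4$, the interior angles sum to $360^\circ$, forcing $k\le 1$; if $k=0$, the polygon is convex and we are done. Otherwise, let $r$ be the reflex vertex with neighbors $u,v$, and let $w$ be the fourth vertex. The diagonal $\overline{rw}$ splits $P$ into the two triangles $\triangle urw$ and $\triangle rvw$, both contained in $P$, so $w$ sees $r$, $u$, and $v$ and therefore $w\in \mathrm{kernel}(P)$.

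For $n=5$, the interior angles sum to $540^\circ$, so $k\le 2$. I would split into three sub-cases. When $k=0$, $P$ is convex. When $k=1$, I would invoke Theorem~\ref{thm:chvatal}: since $\lfloor n/3\rfloor\le 1$ for $n\le 5$, every triangulation of $P$ is a single fan, whose center $c$ is connected by triangulation edges (edges of $P$ or diagonals) to all other vertices, hence sees all of $P$ and lies in $\mathrm{kernel}(P)$; by choosing the triangulation appropriately $c$ can be taken to be a convex vertex non-adjacent to $r$, reducing to the quadrilateral argument. The case $k=2$ splits further according to whether the two reflex vertices are adjacent or not; for each adjacency pattern I would exhibit a convex vertex $c$ (among the three convex vertices) such that both diagonals from $c$ to the reflex vertices stay inside $P$, giving $c\in\mathrm{kernel}(P)$.

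Once a convex vertex $c\in\mathrm{kernel}(P)$ is secured, I would invoke a continuity argument to upgrade it to a non-degenerate boundary segment. Because $P$ is in general position (no three vertices are collinear), $c$ does not lie on the supporting line of any edge of $P$ that is not incident to $c$, and therefore satisfies the inward half-plane inequality of every such edge strictly. Since $c$ is convex, the two edges incident to $c$ bound a locally two-dimensional wedge of the interior of $P$ at $c$, so the intersection of this wedge with the (finitely many) strict half-planes from the non-incident edges is a full two-dimensional neighborhood of $c$ inside $\mathrm{kernel}(P)$. In particular, a positive-length sub-segment of each of the two edges of $P$ incident to $c$ is contained in the kernel, yielding the required boundary segment. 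The main obstacle is the pentagon case with two reflex vertices, where exhibiting the convex vertex with unobstructed visibility to both reflex vertices requires a finite but careful geometric check across the two adjacency patterns.
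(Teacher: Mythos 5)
Your $n=3$ and $n=4$ cases are fine, and your final ``fattening'' step is sound \emph{where it applies}: a convex vertex $c$ of $P$ that lies in the kernel and is off the supporting lines of all non-incident edges does have a two-dimensional kernel neighborhood, hence kernel segments on both incident edges. The gap is that for pentagons the kernel need not contain \emph{any} convex vertex, so there may be nothing to fatten. Concretely, take $v_1=(0,1)$, $v_2=(3,0)$, $v_3=(3,1.2)$, $v_4=(-3,1.2)$, $v_5=(-3,0)$: a simple pentagon in general position with a single reflex vertex $v_1$ (a rectangle with a triangular notch in the bottom edge). Since the kernel is the intersection of $P$ with the inner half-planes of all edges, here it equals $P\cap\{y\ge 1+|x|/3\}$, the triangle with corners $(0,1)$, $(\pm 0.6,1.2)$. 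The only vertex of $P$ it contains is the reflex vertex $v_1$, its intersection with $\partial P$ is the subsegment of $\overline{v_3v_4}$ between $(-0.6,1.2)$ and $(0.6,1.2)$, and the unique fan center (indeed the unique triangulation is the fan at $v_1$) is $v_1$ itself. So your claim that the triangulation can be chosen with a convex center non-adjacent to the reflex vertex is false, and the openness argument is unavailable at a reflex vertex: locally the kernel meets $\partial P$ only in that single point, which is exactly the degeneracy the lemma must rule out. The same phenomenon occurs in your $k=2$ subcases. A secondary error: in the $k=2$ case you infer $c\in\mathrm{kernel}(P)$ from the two diagonals $\overline{cr_1}$, $\overline{cr_2}$ lying in $P$; seeing the reflex vertices is necessary but not sufficient (in the example above $v_3$ sees $v_1$ yet is not in the kernel), since kernel membership requires lying in the inner half-planes of the edges \emph{incident} to the reflex vertices.

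The paper sidesteps all of this by never asking for a vertex in the kernel: after getting star-shapedness from Chv\'atal's theorem, it identifies, in each reflex configuration, the edge ``opposite'' the reflex vertices (e.g.\ $\overline{v_3v_4}$ when $v_1$ is the lone reflex vertex, $\overline{v_2v_3}$ when $v_1,v_4$ are both reflex) and argues by inspection that a positive-length piece of that edge survives the intersection with the two (or four) binding half-planes. To repair your proof you would need to argue about such an edge directly — for instance, show that the nonempty two-dimensional kernel is cut out of $P$ only by the half-planes of the edges at reflex vertices, and that the boundary piece of $P$ it reaches is an arc of $\partial P$ containing at most $n-3$ edges, hence contains a nondegenerate subsegment of some edge — rather than routing the argument through a convex vertex that may not exist in the kernel.
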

\begin{proof}
Let $n$ be the number of vertices of $P = v_1 v_2 \ldots v_n$. By Theorem~\ref{thm:chvatal}, $P$ can be partitioned into $\lfloor n/3\rfloor = 1$ fans (since $n = 3$, $4$ or $5$). Notice that a fan is star-shaped by definition, from which $P$ is star-shaped. The kernel of $P$ is an intersection of $n$ half-planes defined by the edges of $P$. Let $r$ be the number of reflex angles of $P$. There are three cases to consider:
\begin{enumerate}
\item $r = 0$: $P$ is a convex polygon and thus $kernel(P) = P$, implying $\partial P \in kernel(P)$.
\item $r = 1$: Let $v_1$ be the vertex of $P$ at the reflex angle. Refer to Fig.~\ref{fig:KernelBoundary1} and~\ref{fig:KernelBoundary2} for possible polygons. The angles at $v_2$ and $v_n$ are smaller than $\pi$ by the simplicity of $P$. The claim of the lemma then is implied by inspection. Consider the edge $\overline{v_3 v_4}$. It contains a line segment that belongs to $kernel(P)$.
\begin{figure}[h]
\centering
\subfigure[]{%
		\label{fig:KernelBoundary1}%
		\includegraphics[width=0.16\textwidth]{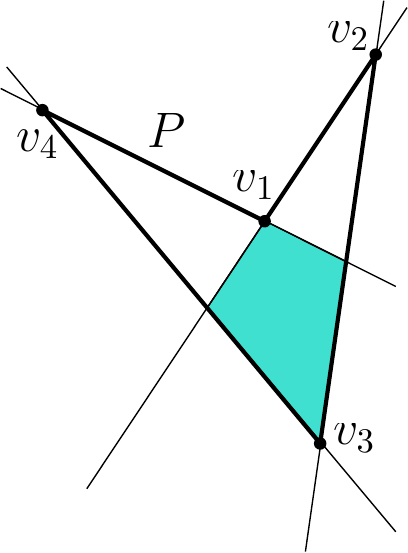}}
\hspace{0.05\textwidth}
\subfigure[]{%
		\label{fig:KernelBoundary2}%
		\includegraphics[width=0.16\textwidth]{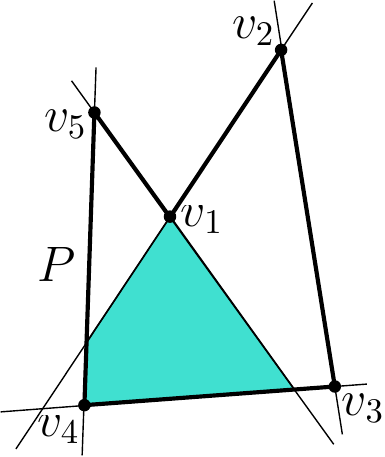}}
\hspace{0.05\textwidth}
\subfigure[]{%
		\label{fig:KernelBoundary3}%
		\includegraphics[width=0.16\textwidth]{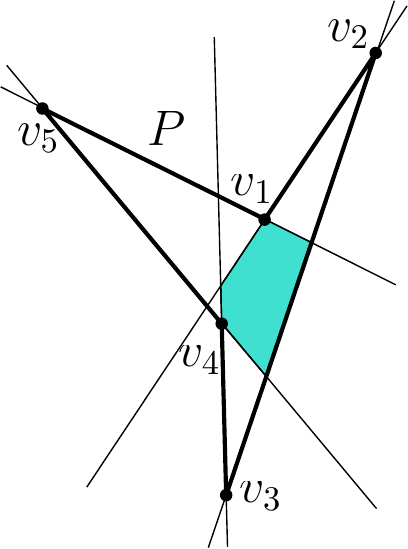}}
\hspace{0.05\textwidth}
\subfigure[]{%
		\label{fig:KernelBoundary4}%
		\includegraphics[width=0.19\textwidth]{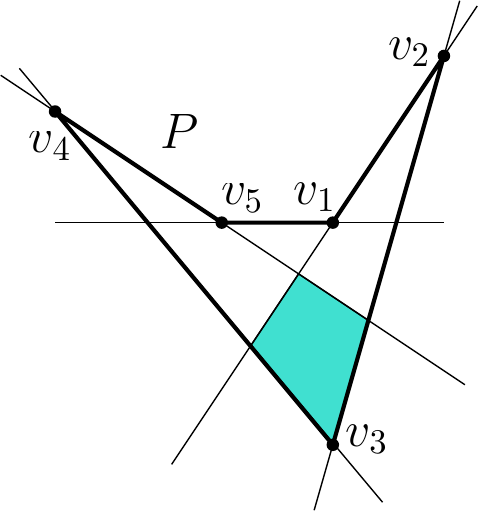}}
\caption{The kernel of $P$ is highlighted in cyan.}
\label{fig:KernelBoundary}
\end{figure}
\item $r = 2$: In this case $n = 5$. Let $v_1$ be the vertex of $P$ at one of the reflex angles. Refer to Fig.~\ref{fig:KernelBoundary3} and~\ref{fig:KernelBoundary4} for possible polygons. The vertex at the other reflex angle is either adjacent to $v_1$ or not. Consider first the case where the vertices at the reflex angles of $P$ are not adjacent. Without loss of generality let $v_4$ be a vertex of $P$ at reflex angle (refer to Fig.~\ref{fig:KernelBoundary3}). The angles at $v_2$, $v_3$ and $v_5$ are smaller than $\pi$ by the simplicity of $P$. It follows that the edge $\overline{v_2 v_3}$ must contain a line segment that belongs to $kernel(P)$.
Consider now the case where the vertices at the reflex angles of $P$ are adjacent. Without loss of generality let $v_5$ be a vertex at reflex angle of $P$ (refer to Fig.~\ref{fig:KernelBoundary4}). The claim of the lemma holds similarly to the discussion presented in the case for $r = 1$, Fig.~\ref{fig:KernelBoundary1}.
\end{enumerate}
\qed
\end{proof}

The problem we study is twofold:
\begin{enumerate}
\item We are given a simple polygon $P$ of size $n$. Our goal is to position at most $\lfloor 2n/3\rfloor$ towers inside $P$ such that every point $p \in P$ can be localized.
\item  We want to design a localization algorithm which does not know $P$, but knows that the locations of the towers were computed using the parity trick. For any point $p \in P$, its location can be found by using the coordinates of the towers that \textbf{see} $p$ and the distances from those towers to $p$.
\end{enumerate}

It may seem counter-intuitive, but the knowledge of the parity trick is \textbf{stronger} than the knowledge of the map of $P$. Some towers (while still on the boundary of some subpolygon of the partition) may end up in the interior of $P$. This is not a problem when the parity trick is used but may lead to ambiguities when only the map of $P$ is known (refer for example to Fig.~\ref{fig:example2}).

The following theorem shows that additional information like the parity trick or the map of $P$ (including its partition) is necessary to achieve localization with the use of less than $n$ towers.

\begin{theorem}
\label{thm:no_map}


Let $P$ be a simple polygon with $n$ vertices. An agent cannot localize itself inside $P$ when less than $n-(n\bmod 3)$ towers are used if the {\em only} information available to the localization algorithm are the coordinates of the towers and the distances of the agent to the towers visible to it. 
\end{theorem}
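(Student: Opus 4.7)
I follow the hint already given in the introduction and use a deep comb construction. I build a simple polygon $P$ with $n = 3k + q$ vertices, where $k = \lfloor n/3 \rfloor$ and $q = n \bmod 3$, consisting of $k$ long, narrow triangular teeth sharing a common base, plus $q$ auxiliary vertices placed on the base if $q \ne 0$ so that the total vertex count is exactly $n$. I tune the tooth dimensions so that every point deep inside a tooth has its visibility polygon entirely contained in that tooth: the teeth are chosen long and narrow enough that sight lines from deep inside a tooth cannot reach other teeth or far portions of the base. With this setup, suppose for contradiction that some tower set $T$ with $|T| < 3k$ and some localization algorithm $\mathcal{A}$ (whose inputs are only the coordinates of the towers visible to the agent and the distances from the agent to them) together localize every point of $P$ correctly. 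By pigeonhole some tooth, call it $\tau$, contains at most two towers, so any agent $p$ sufficiently deep in $\tau$ sees at most two towers.

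\noindent I then split on the number of towers visible to $p$. If $p$ sees zero towers, $\mathcal{A}$ receives no input and obviously cannot localize $p$. If $p$ sees exactly one tower $t$ at distance $d$, every point of $C(t,d) \cap P$ inside $\tau$ produces the same input to $\mathcal{A}$, yielding infinitely many indistinguishable candidates, exactly the ambiguity already noted in Section~\ref{sec:preliminaries}. If $p$ sees exactly two towers $t_1, t_2$ at distances $d_1, d_2$, the points consistent with $\mathcal{A}$'s input are exactly $p$ and its reflection $p'$ across $L(t_1, t_2)$. The key step is to realize this ambiguity inside $P$: I must choose $p$ so that $p'$ is also a valid agent position, i.e.\ $p' \in P$ with $t_1, t_2 \in V(p')$. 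Because both $t_1$ and $t_2$ lie in $\tau$, the line $L(t_1, t_2)$ crosses $\tau$, and $V(t_1) \cap V(t_2) \cap \tau$ is a two-dimensional neighbourhood straddling $L(t_1, t_2)$. Hence, for any $p$ close enough to $L(t_1, t_2)$ within that neighbourhood, $p'$ sits in the same neighbourhood. Both $p$ and $p'$ feed $\mathcal{A}$ identical input, so $\mathcal{A}$ answers at most one of them correctly, contradicting its assumed global correctness on $P$. Combining the three cases forces $|T| \geq 3k = n - (n \bmod 3)$.

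\noindent The main obstacle is the geometric design of the comb: I must simultaneously guarantee (i) the visibility-isolation property, so that an agent deep in a tooth sees only same-tooth towers, and (ii) each tooth is wide enough around any chord connecting two internal towers that a reflection pair $\{p, p'\}$ remains inside the tooth and doubly visible. Both requirements reduce to choosing the teeth sufficiently long and of sufficient (but bounded) width, which is a design freedom because the lower bound only asks for a single worst-case polygon. Degenerate sub-cases, such as two towers lying on a tooth edge or $p$ falling on $L(t_1, t_2)$, are avoided by a small perturbation of the tower placement region or of $p$ itself, so the argument goes through uniformly for all $|T| < n - (n \bmod 3)$.
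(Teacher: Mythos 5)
Your construction and overall plan (a comb with $k$ deep teeth, pigeonhole to find a tooth $\tau$ with at most two towers, then a case analysis on $0$, $1$ or $2$ visible towers) coincide with the paper's. The zero- and one-tower cases are fine. The gap is in the two-tower case. You claim the ambiguous pair $\{p,p'\}$ can always be realized \emph{inside} $P$ because ``$V(t_1)\cap V(t_2)\cap\tau$ is a two-dimensional neighbourhood straddling $L(t_1,t_2)$.'' That is false for an adversarial placement: the tower-positioning scheme you are arguing against may put both $t_1$ and $t_2$ on one of the long edges of the tooth (or, more generally, on any line having the entire deep region of $\tau$ on one side). Then every admissible agent position lies strictly on one side of $L(t_1,t_2)$, its reflection lies outside $P$, and no two points of $P$ produce identical input. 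Your proposed repair --- ``a small perturbation of the tower placement region'' --- is not available to you, because the lower bound is universally quantified over tower placements; you must defeat every placement, including this one, and perturbing $p$ does not help when the whole tooth sits on one side of the line. In this configuration the input actually \emph{does} determine the agent uniquely among points of $P$, so no contradiction can be extracted from $P$ alone.

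The paper closes exactly this case with an additional, essential idea: since the localization algorithm has no knowledge of the polygon, one exhibits a \emph{second} valid instance --- the comb with the offending spike mirrored across the line through the two towers (``the polygon can be mirrored along this edge to avoid unique trilateration'') --- in which the towers have the same coordinates and the agent sits at the reflected point $p'$ at the same distances. The two instances are indistinguishable from the algorithm's input, so any fixed algorithm answers at least one of them incorrectly. Your argument needs this cross-polygon indistinguishability step (or some substitute for it) to cover the one-sided configurations; without it the two-tower case, and hence the theorem, is not established.
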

\begin{proof}
Let $P$ be an arbitrary simple polygon with $n=3$ vertices. 
Assume to the contrary, that $P$ can be trilaterated with $2$ towers. Given the coordinates of the two towers $t_1$ and $t_2$ together with the distances to a query point $p$ one can deduce that $p$ is in one of the two possible locations $C(t_1,d(p,t_1)) \cap C(t_2,d(p,t_2)) = \{p_1, p_2\}$. But without additional information it is impossible to choose one location over another. Refer to Fig.~\ref{fig:no_map}. 
Similarly, any quadrilateral or pentagon requires at least $3$ towers to trilaterate it. 

\begin{figure}[h]
\centering
\subfigure[]{%
		\label{fig:no_map1}%
		\includegraphics[width=0.24\textwidth]{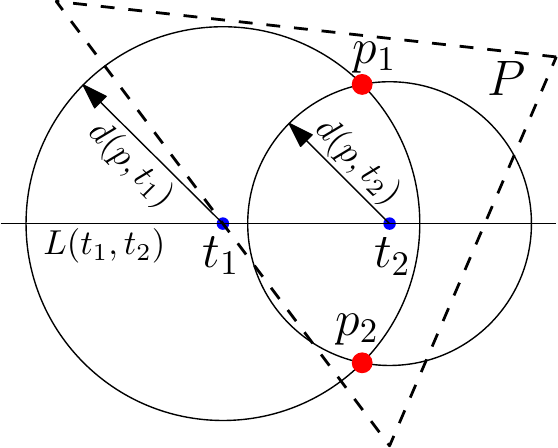}}
\hspace{0.1\textwidth}
\subfigure[]{%
		\label{fig:no_map2}%
		\includegraphics[width=0.27\textwidth]{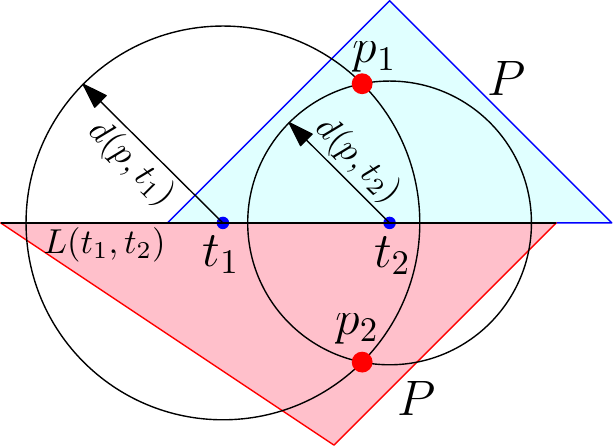}}
\caption{How to locate $p$ if the map of the polygon is unavailable? \textbf{(a)} Ambiguity along $L(t_1, t_2)$. We know that $p_1$ and/or $p_2$ belongs to $P$. \textbf{(b)} We need additional information to tell if $P$ (and $p$) is above or below $L(t_1, t_2)$.}
\label{fig:no_map}
\end{figure}

Let $P$ be a comb polygon with $n = 3k + q$ vertices (for integer $k \geq 1$ and $q = 0$, $1$ or $2$) such that one of its $k$ spikes contains $q$ extra vertices. Refer to Fig.~\ref{fig:comb_full}. No point of $P$ can see the complete interior of two different spikes.  
Assume to the contrary that $P$ can be trilaterated with less than $n - q$ towers. In other words, assume that $P$ can be trilaterated with $3k - 1$ towers. It follows that one of the spikes contains less than $3$ towers. We showed for smaller polygons (with $n = 3$, $4$ or $5$) that two towers cannot trilaterate it. Even if we know that the two towers are positioned on the same edge of $P$, the polygon can be mirrored along this edge to avoid unique trilateration. Observe that no polygon or spike can be trilaterated with one or no towers. 

\begin{figure}[h]
\centering
\subfigure[]{%
		\label{fig:comb}%
		\includegraphics[width=0.42\textwidth]{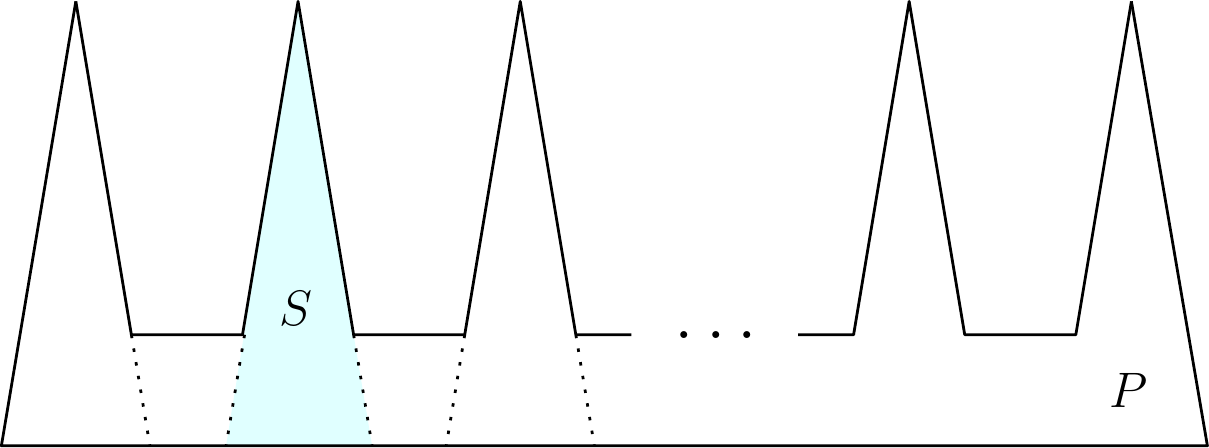}}
\hspace{0.1\textwidth}
\subfigure[]{%
		\label{fig:comb1}%
		\includegraphics[width=0.077\textwidth]{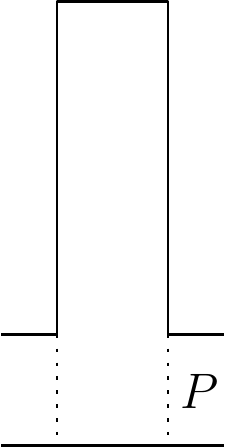}}
\hspace{0.1\textwidth}
\subfigure[]{%
		\label{fig:comb2}%
		\includegraphics[width=0.077\textwidth]{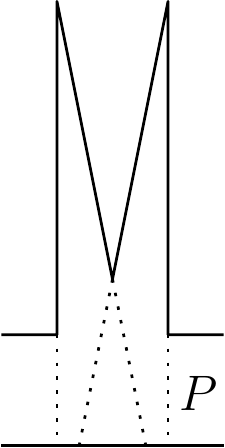}}
\caption{Comb polygon with $n = 3k + q$ vertices. \textbf{(a)} One of the $k$ spikes $S$ is shown in cyan. \textbf{(b)} One of the spikes when $q = 1$. \textbf{(c)} One of the spikes when $q = 2$.}
\label{fig:comb_full}
\end{figure}
\qed
\end{proof}

\section{T{\'o}th's Partition}
\label{sec:partition}
\pdfbookmark[1]{T{\'o}th's Partition}{sec:partition}

A simple polygonal chain $c$ is a \emph{cut} of a polygon $P$ if the two endpoints of $c$ belong to $\partial P$ and all interior points of $c$ are interior points of $P$. A cut $c$ decomposes $P$ into two polygons $P_1$ and $P_2$ such that $P = P_1 \cup P_2$ and $c = P_1 \cap P_2$. 

\begin{definition}[Diagonal] 
A \emph{diagonal} of $P$ is a line segment that connects non-adjacent vertices of $P$ and is contained in or on the boundary of $P$. If $c$ is a cut and a diagonal of $P$ then it is called a \emph{diagonal cut}.
\end{definition}

Notice that the above definition of the diagonal is non-standard. We define a diagonal as a line segment that can contain boundary points in its interior, while a diagonal in a standard definition is a line segment whose intersection with the boundary is precisely its two endpoints.

T{\'o}th showed in~\cite{Toth2000121} that any simple polygon in general position of size $n$ can be guarded by $\lfloor n/3\rfloor$ point guards whose range of vision is $180^\circ$ (let us call this result \textbf{T{\'o}th's Theorem}). His approach is to decompose $P$ into subpolygons via \emph{cuts} and to specify the locations of the guards. The cuts are composed of one or two line segments and are not restricted to be diagonal cuts. He uses a constructive induction to show his main result.
Let $n_1$ (respectively $n_2$) denote the size of $P_1$ (respectively $P_2$). When $P$ contains a cut that satisfies $\left \lfloor \frac{n_1}{3} \right \rfloor + \left \lfloor \frac{n_2}{3} \right \rfloor \leq \left \lfloor \frac{n}{3} \right \rfloor$, the proof of T{\'o}th's Theorem can be completed by applying induction to both $P_1$ and $P_2$.

T{\'o}th's method heavily relies upon the partitioning of a polygon into subpolygons (on which he can apply induction). He performs diagonal cuts whenever it is possible, otherwise he cuts along a continuation of some edge of $P$; along a two-line segment made of an extension of two edges of $P$ that intersect inside $P$; or along the bisector of a reflex vertex of $P$. Notice that the three latter types of cuts may introduce new vertices that are not necessarily in general position with the given set of vertices.

However, T{\'o}th assumes that every cut produces polygons in general position, which is a very strong assumption. We strengthen the work~\cite{Toth2000121} by lifting this assumption and reproving T{\'o}th's result. We assume that the input polygon is in general position, while non-general position may occur for
subpolygons of the partition. Moreover, we found and fixed several mistakes in~\cite{Toth2000121}.

\begin{definition}[Good Cut] 
Let $n > 5$. A cut is called a \emph{good cut} if it satisfies the following: $\left \lfloor \frac{n_1}{3} \right \rfloor + \left \lfloor \frac{n_2}{3} \right \rfloor \leq \left \lfloor \frac{n}{3} \right \rfloor$. If a good cut is a diagonal then it is called a \emph{good diagonal cut}.
\end{definition}

\begin{definition}[Polygon*] 
\label{def:polygon}
\emph{Polygon*} is a weakly simple polygon (as defined in~\cite{DBLP:conf/soda/ChangEX15}) whose vertices are distinct and whose interior angles are strictly bigger than $0$ and strictly smaller than $2\pi$. Notice that polygon* includes simple polygons.
\end{definition}

We assume that every polygon of the partition (i.e. subpolygon of $P$) is polygon*. To avoid confusion, let $P'$ refer to a polygon* to which we apply the inductive partitioning. In other words, $P'$ can refer to the input polygon $P$ before any partition was applied as well as to any subpolygon of $P$ obtained during the partitioning of $P$. Recall that $P$ is in general position, while $P'$ may not be in general position.  

\begin{definition}[Triangulation] 
A decomposition of a polygon into triangles by a maximal
set of non-intersecting (but possibly overlapping) diagonals is called a \emph{triangulation} of the polygon. Notice that a triangulation of polygon* may contain triangles whose three vertices are collinear.
\end{definition}

Notice that the above definition of triangulation is different from the classical one by that that it allows overlapping diagonals and thus permits triangles with three collinear vertices.

If the polygon is not in general position then its triangulation may include triangles whose three vertices are collinear. We call such triangles \emph{degenerate triangles}. Refer to Fig.~\ref{fig:case01}, showing an example of a degenerate triangle $\triangle v_1 v_2 v_3$. The diagonal $\overline{v_1 v_3}$ cannot be a good diagonal cut even if it partitions $P'$ into $P_1$ and $P_2$ such that $\left \lfloor \frac{n_1}{3} \right \rfloor + \left \lfloor \frac{n_2}{3} \right \rfloor \leq \left \lfloor \frac{n}{3} \right \rfloor$, because interior points of a cut cannot contain points of $\partial P'$. 

To extend T{\'o}th's partition to polygons* 
we need to extend the definition of a cut. A simple polygonal chain $c'$ is a \emph{dissection} of $P'$ if the two endpoints of $c'$ belong to $\partial P'$ and all interior points of $c'$ are either interior points of $P'$ or belong to $\partial P'$. A dissection $c'$ decomposes $P'$ into two polygons* $P_1$ and $P_2$ (that are not necessarily in general position) such that $P' = P_1 \cup P_2$ and $c' = P_1 \cap P_2$. If $c'$ is a dissection and a diagonal of $P'$ then it is called a \emph{diagonal dissection}.

\begin{definition}[Good Dissection] 
Let $n > 5$. A dissection is called a \emph{good dissection} if $\left \lfloor \frac{n_1}{3} \right \rfloor + \left \lfloor \frac{n_2}{3} \right \rfloor \leq \left \lfloor \frac{n}{3} \right \rfloor$.
\end{definition}

We extend the results in~\cite{Toth2000121} by removing the assumption that the partitioning produces subpolygons in general position and by thus strengthening the result. In this paper, we need to refer to many Lemmas, Propositions and Claims from~\cite{Toth2000121}. Indeed, we apply some of these results, we fill the gaps in some proofs of these results and we generalize some others. In order for this paper to be self-contained, we write the statements of all these results with their original numbering~\cite{Toth2000121}. 

\medskip
\noindent
\textbf{Simplification Step:} If $P'$ has consecutive vertices $v_1$, $v_2$ and $v_3$ along $\partial P'$ that are collinear then: 
\begin{enumerate}
\item If the angle of $P'$ at $v_2$ is $\pi$ then replace $v_2$ together with its adjacent edges by the edge $\overline{v_1 v_3}$.
\item If the angle of $P'$ at $v_2$ is $0$ or $2\pi$ then delete $v_2$ together with its adjacent edges and add the edge $\overline{v_1 v_3}$. Assume w.l.o.g. that the distance between $v_3$ and $v_2$ is smaller than the distance between $v_1$ and $v_2$. The line segment $\overline{v_2 v_3}$ will be guarded despite that it was removed from $P'$. Refer to the following subsections for examples.
\end{enumerate} 
In both cases we denote the updated polygon by $P'$; its number of vertices decreased by $1$. 

Assume for simplicity that $P'$ does not contain consecutive collinear vertices along $\partial P'$. Let $n > 5$ be the number of vertices of $P'$. Let $k$ be a positive integer and $q \in \{0, 1, 2\}$ such that $n = 3k +q$. Notice that a diagonal dissects $P'$ into $P_1$ (of size $n_1$) and $P_2$ (of size $n_2$) such that $n_1 + n_2 = n+2$.

The proofs of the following three propositions are identical to those in T{\'o}th's paper.

\begin{mydefP}
\label{TothProp1}
For $P'$ with $n = 3k$ any diagonal is a good 
dissection. 
\end{mydefP}

\begin{mydefP}
\label{TothProp2}
For $P'$ with $n = 3k+1$ there exists a diagonal that represents a good dissection.
\end{mydefP}

\begin{mydefP}
\label{TothProp3}
For $P'$ with $n = 3k+2$ a dissection is a good dissection if it decomposes $P'$ into polygons* $P_1$ and $P_2$ (not necessarily simple polygons) such that $n_1 = 3k_1 + 2$ and $n_2 = 3k_2 + 2$ (for $k_1 +k_2 = k$).
\end{mydefP}

\subsection{Case Study}
\label{subsec:CaseStudy}
\pdfbookmark[2]{Case Study}{subsec:CaseStudy}

\begin{wrapfigure}{r}{0.35\textwidth}
\vspace{-20pt}
\centering
\includegraphics[width=0.3\textwidth]{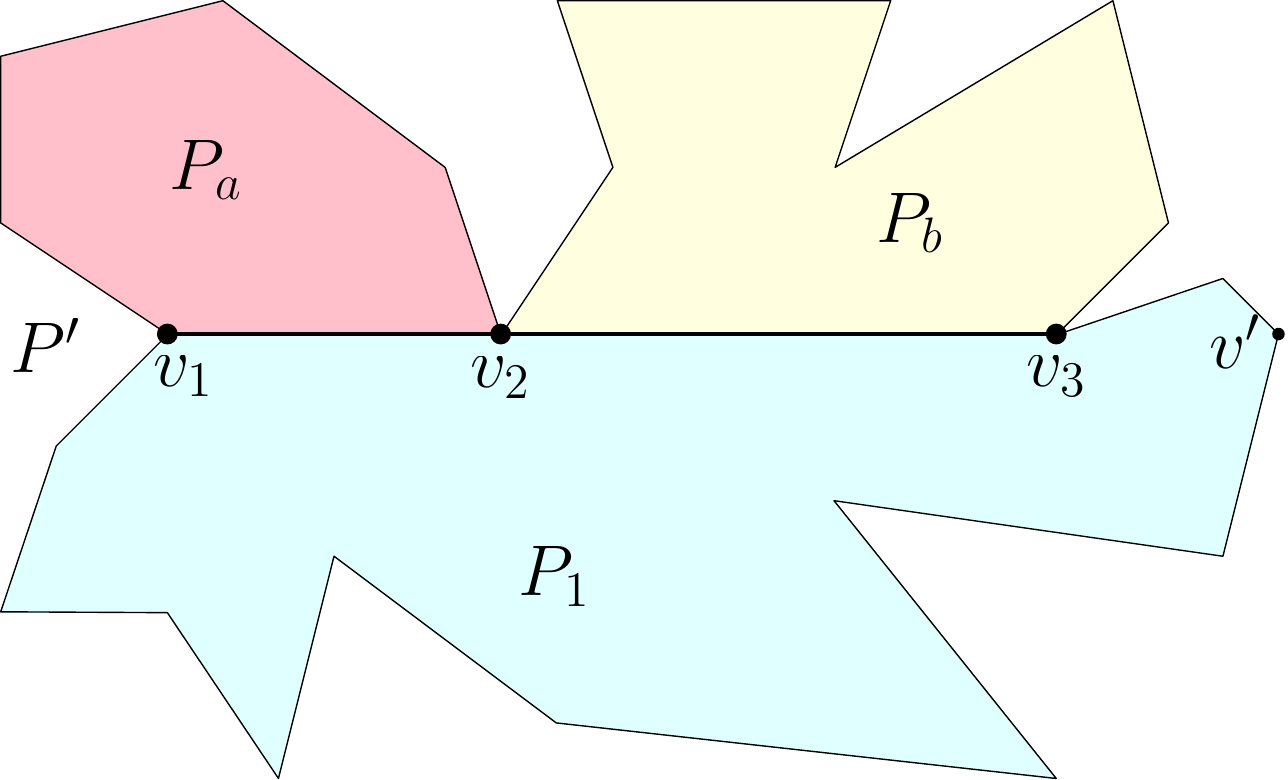}
\caption{The diagonal $\overline{v_1 v_3}$ contains a vertex $v_2$ of $P'$. $P_2 = P_a \cup P_b$. Notice that $v_2 \notin P_1$.}
\label{fig:case01}
\vspace{-20pt}
\end{wrapfigure}
In this subsection we study how Propositions~\ref{TothProp1},~\ref{TothProp2} and~\ref{TothProp3} can be applied to polygons*. 
Let $\triangle v_1 v_2 v_3$ be a minimum degenerate triangle in $P'$, i.e. it does not contain any vertex of $P'$ other than $v_1$, $v_2$ or $v_3$. Consider the example depicted in Fig.~\ref{fig:case01}. The diagonal $\overline{v_1 v_3}$ partitions $P'$ into two polygons: $P_1$ and $P_2$ ($P_2$ can be further viewed as a union of two subpolygons $P_a$ and $P_b$). Notice that $v_2 \notin P_1$. There is a possibility for $P'$ to have a vertex $v' \notin \{v_1, v_2, v_3\}$ such that $v' \in L_{v_1 v_3}$. However $v'$ cannot belong to the line segment $\overline{v_1 v_3}$, otherwise, the triangle $\triangle v_1 v_2 v_3$ is not minimum, and we can choose $\triangle v_1 v_2 v'$ or $\triangle v' v_2 v_3$ instead. It follows that no vertex of $P_1$ (respectively $P_a$, $P_b$) is located on the edge $\overline{v_1 v_3}$ (respectively $\overline{v_1 v_2}$, $\overline{v_2 v_3}$). Notice that $P'$ may contain an edge $e'$ that contains $\overline{v_1 v_3}$; $P_1$ inherits it (since $v_2 \notin P_1$), but it won't cause any trouble because of the ``simplification'' step described in the beginning of this section. Notice also, that two vertices of $P'$ cannot be at the same location. 


Assume that $\overline{v_1 v_3}$ is a good dissection, i.e. it decomposes $P'$ into $P_1$ and $P_2$ such that $\left \lfloor \frac{n_1}{3} \right \rfloor + \left \lfloor \frac{n_2}{3} \right \rfloor \leq \left \lfloor \frac{n}{3} \right \rfloor$. Assume also that $n>5$. Let $n_a$ (respectively $n_b$) be the number of vertices of $P_a$ (respectively $P_b$). Notice that $n_a + n_b = n_2 + 1$ because $v_2$ was counted twice. When it is possible, we prefer to avoid cutting along the diagonal $\overline{v_1 v_3}$. However, when necessary, it can be done in the following way. We consider three cases (refer to Fig.~\ref{fig:case01}):

{\color{red} \textbf{Case 1:}} $n = 3k$. By T{\'o}th's Proposition~\ref{TothProp1} every diagonal is a good dissection. If $P_a$ is \textbf{not} composed of the line segment $\overline{v_1 v_2}$ only, then $\overline{v_1 v_2}$ is a good diagonal dissection, that partitions $P'$ into two polygons $P_a$ and $P_1 \cup P_b$. Otherwise, $\overline{v_2 v_3}$ is a good diagonal dissection, that partitions $P'$ into two polygons $P_b$ and $P_1 \cup P_a$. Notice that $\overline{v_1 v_2}$ and $\overline{v_2 v_3}$ cannot be both edges of $P'$ because of the simplification step we applied to $P'$.

Alternatively, we can dissect $P'$ along $\overline{v_1 v_3}$. The polygon $P_2$ has an edge $\overline{v_1 v_3}$ that contains vertex $v_2$. If a $180^\circ$-guard is located at $v_2$ for any subpolygon $P_2'$ of $P_2$, then we position the towers close to $v_2$ on a line segment that contains $v_2$ (but not necessarily in $kernel(P_2') \cap \partial P_2'$) only if $v_2$ is not a vertex of the original polygon $P$. However, if $v_2 \in P$ then we position our towers in the close proximity to $v_2$ but in the interior of $P_1$, oriented towards $P_2$.

{\color{red} \textbf{Case 2:}} $n = 3k+1$. Since $\overline{v_1 v_3}$ decomposes $P'$ into $P_1$ and $P_2$ such that $\left \lfloor \frac{n_1}{3} \right \rfloor + \left \lfloor \frac{n_2}{3} \right \rfloor \leq \left \lfloor \frac{n}{3} \right \rfloor$ then either $n_1 = 3k_1 + 1$ and $n_2 = 3k_2 + 2$, or $n_1 = 3k_1 + 2$ and $n_2 = 3k_2 + 1$ for $k_1 +k_2 = k$. If $\overline{v_1 v_2}$ is an edge of $P'$ (meaning that $P_a$ consists of a line segment $\overline{v_1 v_2}$ only) then we can dissect $P'$ either along $\overline{v_2 v_3}$ or along $\overline{v_1 v_3}$. In the former case $P'$ will be decomposed into $P_1 \cup \{v_2\}$ and $P_b$ ($v_2$ will be deleted from $P_1 \cup \{v_2\}$ during the simplification step to obtain $P_1$). In the latter case $P'$ will be decomposed into $P_1$ and $P_b \cup \{v_1\}$ ($v_1$ will be removed from $P_b \cup \{v_1\}$ during the simplification step to obtain $P_b$). In either case we do not have to specifically guard the line segment $\overline{v_1 v_2}$. If $P_1$ is guarded then so is $\overline{v_1 v_2}$. Similarly treat the case where $\overline{v_2 v_3}$ is an edge of $P'$.

Assume that $n_a, n_b >2$. Recall, that $n_b = n_2 - n_a + 1$. Several cases are possible:
\begin{itemize}
\item[$\blacktriangleright$] $n_a = 3k_a + 1$. Then we dissect $P'$ along $\overline{v_1 v_2}$. This partition creates the polygons $P_a$ of size $n_a = 3k_a + 1$ and $P_1 \cup P_b$ of size $3(k_1 + k_2 - k_a) +2$. 
\item[$\blacktriangleright$] $n_a = 3k_a + 2$. Then we dissect $P'$ along $\overline{v_1 v_2}$. This partition creates the polygons $P_a$ of size $n_a = 3k_a + 2$ and $P_1 \cup P_b$ of size $3(k_1 + k_2 - k_a) +1$. 
\item[$\blacktriangleright$] $n_a = 3k_a$. Then:
\begin{itemize}
\item $n_1 = 3k_1 + 2$ and $n_2 = 3k_2 + 1$. Then we dissect $P'$ along $\overline{v_2 v_3}$. This partition creates the polygons $P_b$ of size $n_b = 3(k_2 - k_a)+2$ and $P_1 \cup P_a$ of size $3(k_1 +k_a) +1$. 
\item $n_1 = 3k_1 + 1$ and $n_2 = 3k_2 + 2$. Then both $\overline{v_1 v_2}$ and $\overline{v_2 v_3}$ are \textbf{not} good diagonal cuts. We cannot avoid dissecting along $\overline{v_1 v_3}$. If a $180^\circ$-guard eventually needs to be positioned at $v_2$ (for any subpolygon of $P_2$) and $v_2$ is a vertex of $P$, then we position our towers close to $v_2$ but in the interior of $P_1$, oriented towards~$P_2$.
\end{itemize}
\end{itemize}

{\color{red} \textbf{Case 3:}} $n = 3k+2$. In this case $n_1 = 3k_1+2$ and $n_2 = 3k_2+2$. If $n_a = 2$ then either dissect $P'$ along $\overline{v_2 v_3}$ (and later delete $v_2$ from $P_1 \cup v_2$ during the simplification step), or dissect $P'$ along $\overline{v_1 v_3}$ (delete $v_1$ from $P_b \cup v_1$ during the simplification step). Notice that in both cases $\overline{v_1 v_2}$ will be guarded. The case where $n_b = 2$ is similar.

If $n_a, n_b >2$ then we have no choice but to dissect along $\overline{v_1 v_3}$. This creates a polygon $P_2$ whose kernel degenerates into a single point $v_2$. It is not a problem for $180^\circ$-guards but it is a serious obstacle for our problem. If $v_2$ is a vertex of $P$ and a $180^\circ$-guard is located at $v_2$ for any subpolygon of $P_2$ then we position our towers close to $v_2$ but in the interior of $P_1$, and orient those towers towards $P_2$. We consider the general approach to this problem in the following subsection.

\subsection{Point-Kernel Problem}
\label{subsec:PointKernel}
\pdfbookmark[2]{Point-Kernel Problem.}{subsec:PointKernel}

In Section~\ref{subsec:CaseStudy} we studied simple cases where a diagonal dissection is applied to polygons* (or polygons in non-general position). In this subsection, we show how to circumvent some difficulties that arise when adapting T{\'o}th's partitioning to our problem.

The dissection of $P$ may create subpolygons that are polygons*. 
This means that the partition of $P$ may contain star-shaped polygons whose kernels degenerate into a single point. While this is not a problem for $180^\circ$-guards, it is a serious obstacle for tower positioning. Indeed, we need at least two distinct points in the kernel of each part of the partition to trilaterate~$P$. 

\begin{wrapfigure}{r}{0.42\textwidth}
\centering
\includegraphics[width=0.38\textwidth]{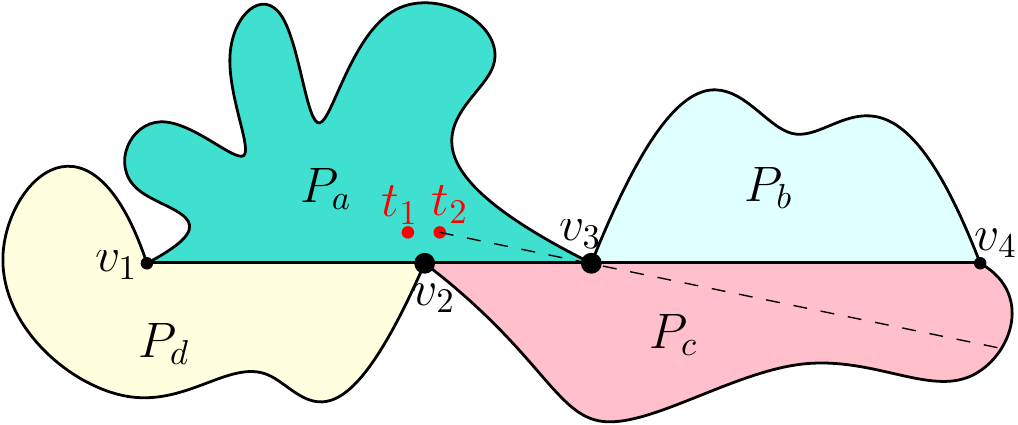}
\caption{$\overline{v_1 v_4}$ is a good diagonal dissection that contains two vertices of $P$: $v_2$ and $v_3$. $P_1 = P_a\cup P_b$ and $P_2 = P_c\cup P_d$. Notice that $v_2 \notin P_a$ and $v_3 \notin P_c$.}
\label{fig:PointKernel1}
\vspace{-10pt}
\end{wrapfigure}

Let $\overline{v_1 v_4}$ be a good dissection, i.e. it decomposes $P'$ into $P_1$ and $P_2$ such that $\left \lfloor \frac{n_1}{3} \right \rfloor + \left \lfloor \frac{n_2}{3} \right \rfloor \leq \left \lfloor \frac{n}{3} \right \rfloor$. Assume also that $n>5$. Assume that $\overline{v_1 v_4}$ contains at least $2$ more vertices of $P'$ in its interior. Let $v_2, v_3 \in \overline{v_1 v_4}$. Refer to Fig.~\ref{fig:PointKernel1}. Notice that $v_2$ and $v_3$ belong to subpolygons of $P'$ on the opposite sides of $L(v_1, v_4)$. Recall that the original polygon $P$ is given in general position. Thus at most two vertices of $P$ belong to the same line. If $v_2$ and $v_3$ are not both vertices of $P$ then there is no problem for tower positioning, and the dissection via $\overline{v_1 v_4}$ can be applied directly. We then follow T{\'o}th's partition and replace every $180^\circ$-guard with a pair of towers in the guard's vicinity. However, if $v_2$ and $v_3$ are vertices of $P$ then the visibility of towers can be obstructed. Consider the example of Fig.~\ref{fig:PointKernel1}. Assume that a $180^\circ$-guard is positioned at $v_2$ to observe a subpolygon of $P_c \cup P_d$ (say, pentagon $v_1 v_4 v_c v_2 v_d$ for $v_c \in P_c$ and $v_d \in P_d$). We cannot replace the $180^\circ$-guard with a pair of towers because the kernel of the pentagon degenerates into a single point $v_2$. Our attempt to position towers in the vicinity of $v_2$ but interior to $P_a$ is not successful either, because $v_3$, as a vertex of $P$, blocks visibility with respect to $P_c$. We have to find another dissection. In general, we are looking for a dissection that destroys the diagonal $\overline{v_1 v_4}$.

Recall that in this subsection $P'$ is a subpolygon of $P$ with collinear vertices $v_1$, $v_2$, $v_3$ and $v_4$. Assume that $v_2$ and $v_3$ are vertices of $P$. It follows that there are no vertices of $P$ on $L(v_1, v_4)$ other than $v_2$ and $v_3$. Assume that $\overline{v_1 v_4}$ is a good dissection of $P'$. If one of $\overline{v_1v_3}$, $\overline{v_1v_2}$, $\overline{v_3v_4}$, $\overline{v_2v_4}$ or $\overline{v_2v_3}$ is a good dissection then we cut along it. Only the cut along $\overline{v_2v_3}$ destroys the diagonal $\overline{v_1v_4}$ and eliminates the problem of having towers in the vicinity of $v_2$ (respectively $v_3$) with visibility obstructed by $v_3$ (respectively $v_2$). The cut along $\overline{v_1v_3}$, $\overline{v_1v_2}$, $\overline{v_3v_4}$ or $\overline{v_2v_4}$ only reduces $P'$. So assume that none of the diagonals $\overline{v_1v_3}$, $\overline{v_1v_2}$, $\overline{v_3v_4}$, $\overline{v_2v_4}$ or $\overline{v_2v_3}$ represent a good dissection. It follows from T{\'o}th's Proposition~\ref{TothProp1} that $n \neq 3k$. Consider the following cases:

{\color{red} \textbf{Case 1:}} $n = 3k+1$. By T{\'o}th's Proposition~\ref{TothProp2} either $n_1 = 3k_1 + 2$ and $n_2 = 3k_2 + 1$, or $n_1 = 3k_1 + 1$ and $n_2 = 3k_2 + 2$ for $k_1 +k_2 = k$, where $P_1 = P_a\cup P_b$ and $P_2 = P_c\cup P_d$. Assume w.l.o.g. that $n_1 = 3k_1 + 2$ and $n_2 = 3k_2 + 1$. We assumed that $\overline{v_2v_3}$ is not a good dissection and thus the size of $P_c \cup P_b$ is a multiple of $3$ and the size of $P_d \cup P_a$ is a multiple of $3$. Consider the following three subcases:

{\color{blue} \textbf{Case 1.1:}} $|P_d| = 3k_d$; thus $|P_c| = 3k_c+2$ or $|P_c| = 2$. Since the size of $P_c \cup P_b$ is a multiple of $3$, we have $|P_b| = 3k_b+2$ or $|P_b| = 2$. Notice that $P_b$ and $P_c$ cannot be both of size $2$, otherwise  $v_4$ would be deleted during the simplification step. Both diagonals $\overline{v_3v_4}$ and $\overline{v_2v_4}$ represent a good dissection. Thus, if $|P_c| \neq 2$ then dissect along $\overline{v_2v_4}$; if $|P_b| \neq 2$ then dissect along $\overline{v_3v_4}$. The polygon $P_a \cup P_d \cup \triangle v_2 v_3 v_4$ will be simplified and it will loose $\overline{v_3 v_4}$ (which is guarded by $P_b$ or/and $P_c$).

{\color{blue} \textbf{Case 1.2:}} $|P_d| = 3k_d+1$ or $|P_d| = 3k_d+2$. Then $\overline{v_1v_2}$ is a good diagonal dissection.

{\color{blue} \textbf{Case 1.3:}} $|P_d| = 2$. Then $|P_a| = 3k_a +2$ and thus $\overline{v_1v_3}$ is a good diagonal dissection, which creates the polygons $P_a$ and $P_b \cup P_c \cup \{v_1\}$. The vertex $v_1$ will be deleted from $P_b \cup P_c \cup \{v_1\}$ during the simplification step. Notice that the line segment $\overline{v_1v_2}$ will be guarded by $P_a$.

{\color{red} \textbf{Case 2:}} $n = 3k+2$. Since $\overline{v_1 v_4}$ is a good dissection, it partitions $P'$ into $P_1 = P_a\cup P_b$ of size $n_1 = 3k_1 +2$ and $P_2 = P_c\cup P_d$ of size $n_2 = 3k_2 +2$.

{\color{blue} \textbf{Case 2.1:}} $|P_b| = 3k_b+1$; thus $|P_a|=3k_a +2$ or $|P_a| = 2$. We assumed that $\overline{v_1v_3}$ is not a good dissection and thus 
\begin{wrapfigure}{r}{0.42\textwidth}
\vspace{-15pt}
\centering
\includegraphics[width=0.4\textwidth]{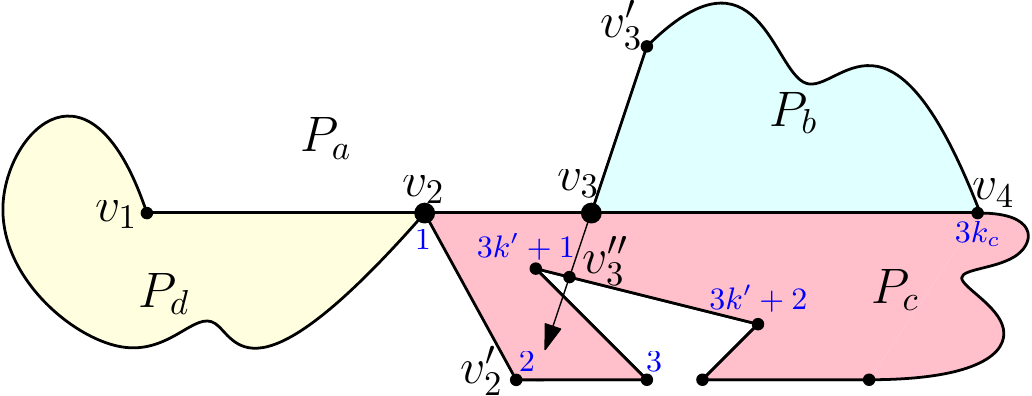}
\caption{$\overline{v_1 v_4}$ is a good diagonal dissection that contains two vertices of $P$: $v_2$ and $v_3$. $P_a = \{v_1, v_3\}$, $|P_b| = 3k_b+1$, $|P_c| = 3k_c$ and $|P_d| = 3k_d$.}
\label{fig:PointKernel2}
\vspace{-25pt}
\end{wrapfigure}
$|P_a| \neq3 k_a +2$. It follows that $P_a$ is a line segment $\overline{v_1v_3}$. Symmetrically, if $|P_d| = 3k_d+1$ then $|P_c| = 2$. If $|P_b| = 3k_b+1$ and $|P_d| = 3k_d+1$ then we have a good dissection $\overline{v_2 v_3}$, which is a contradiction.

The case where $|P_b| = 3k_b+1$ and $|P_c| = 3k_c+1$ is not possible because in this case $P_a = \overline{v_1v_3}$ and $P_d = \overline{v_1v_2}$ and thus $v_1$ would not survive the simplification step.

Thus, if $|P_b| = 3k_b+1$ then $|P_c| = 3k_c$ and $|P_d| = 3k_d$. Refer to Fig.~\ref{fig:PointKernel2}.
Let $v_2'$ be an immediate neighbour of $v_2$ such that $v_2'$ is a vertex of $P_c$ and $v_2' \neq v_4$. If $v_2'$ and $v_3$ see each other, then $\overline{v_2' v_3}$ is a diagonal of $P'$. It dissects $P'$ into two subpolygons $P_d \cup \triangle v_2v_3v_2'$ of size $3k_d+2$ and $(P_b \cup P_c) \setminus \triangle v_2v_3v_2'$ of size $3k_b+3k_c-1 = 3(k_b+k_c-1)+2$. Assume now that $v_2'$ and $v_3$ do not see each other. Let us give numerical labels to the vertices of $P_c$ as follows: $v_2$ gets label $1$, $v_2'$ gets $2$, and so on, $v_4$ will be labelled $3k_c$. If $v_3$ can see a vertex of $P_c$ with label whose value modulo $3$ equals $2$ then dissect $P'$ along the diagonal that connects $v_3$ and that vertex. If no such vertex can be seen from $v_3$, we do the following. Let $v_3'$ be an immediate neighbour of $v_3$ such that $v_3'$ is a vertex of $P_b$ and $v_3' \neq v_4$. Let $v_3''$ be the point on $\partial P_c$ where the line supporting $\overline{v_3' v_3}$ first hits $\partial P_c$ (refer to Fig.~\ref{fig:PointKernel2}). If $v_3''$ belongs to an edge with vertices whose labels modulo $3$ equal $1$ and $2$ then dissect $P'$ along $\overline{v_3 v_3''}$. This dissection creates polygons $P_d \cup \{1,2, \ldots, 3k'+1, v_3'', v_3\}$ of size $3(k_d+k')+2$
and $(\{v_3'', 3k'+2, \ldots 3k_c\} \cup P_b) \setminus \{v_3\}$ of size $3(k_c-k'+k_b-1)+2$. 

\begin{wrapfigure}{r}{0.42\textwidth}
\vspace{-10pt}
\centering
\includegraphics[width=0.36\textwidth]{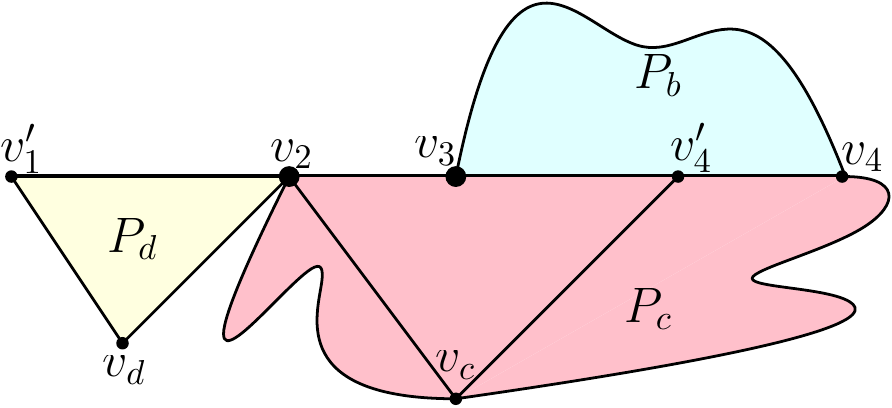}
\caption{$v_2$ and $v_3$ are vertices of $P$. $|P_b| = 3k_b+1$, $|P_c| = 3k_c$ and $|P_d| = 3k_d$. }
\label{fig:PointKernel3}
\vspace{-10pt}
\end{wrapfigure}
We are still in case $2.1$ and all its assumptions are still applicable. If the above scenario (about dissecting $P_c$ via a line that contains $v_3$) has not worked, we consider the case that we tried to avoid all along: the dissection of $P'$ along $\overline{v_1 v_4}$ such that the successive partitioning of $P_2$ created a subpolygon with a single-point kernel at $v_2$ whose edge contains $v_3$ in its interior. In other words, we assume that $P_2 = P_c \cup P_d$ has undergone some partitioning that resulted in the creation of the pentagon $v_1' v_4' v_c v_2 v_d$ for $v_c, v_4' \in P_c$, $v_d, v_1' \in P_d$ such that $v_1', v_4' \in L(v_2, v_3)$ and $v_4' \in \overline{v_3 v_4}$ (refer to Fig.~\ref{fig:PointKernel3}). It is impossible to localize an agent in the pentagon $v_1' v_4' v_c v_2 v_d$ with a pair of towers only. However, if $v_4' \in \overline{v_2 v_3}$ or $v_4' = v_3$ then a pair of towers in the vicinity of $v_2$ can oversee the pentagon, because $v_3$ is not causing an obstruction in this case. Despite that $P_a$ is a line segment, the original polygon $P$ has a non-empty interior in the vicinity of $v_2$ to the right of the ray $\overrightarrow{v_3 v_2}$. Notice that $v_1'$ may be equal to $v_1$; $v_4'$ may be equal to $v_4$. 
Moreover, $v_c$ and $v_4'$ may not be vertices of $P_1$ but were created during the partition. We also assume that $P_1 = P_b \cup \{v_1\}$ was not partitioned yet. Let $P_c'$ be a subpolygon of $P_c$ that inherits the partition of $P_c$ to the side of the dissection $\overline{v_c v_4'}$ that contains $v_4$ (refer Fig.~\ref{fig:PointKernel45}). Notice that $v_c$ and $v_3$ can see each other.  If $\overline{v_c v_4'}$ is a diagonal dissection then instead use $\overline{v_c v_3}$. In this case $v_4'$ can be deleted and $v_3$ is added to $P_c'$. The size of $P_c'$ does not change and the number of guards required to guard $P_c'$ does not increase. A pair of towers in the vicinity of $v_2$ can oversee the pentagon $v_1' v_3 v_c v_2 v_d$.

If the dissection $\overline{v_c v_4'}$ is a continuation of an edge of $P_c$ (let $v$ be a vertex that is adjacent to this edge) then two cases are possible. In the first case, the edge $\overline{v_c v}$ that produced the dissection is contained in $\overline{v_c v_4'}$ (refer Fig.~\ref{fig:PointKernel4}). In this case, instead of dissecting along $\overline{v_c v_4'}$ use $\overline{v v_3}$. The size of $P_c'$ does not change and the hexagon $v_1' v_3 v v_c v_2 v_d$ can be guarded by a pair of towers in the vicinity of $v_2$ (because $v_2 v_c v v_3$ is convex). In the second case the edge $\overline{v_c v}$ that produced the dissection $\overline{v_c v_4'}$ is not contained in $\overline{v_c v_4'}$ (refer Fig.~\ref{fig:PointKernel5}). In the worst case the size of $P_c'$ is $3k_{c'}+2$. If instead of dissecting along $\overline{v_c v_4'}$ we use $\overline{v_c v_3}$ then the size of $P_c'$ increases by $1$, which results in an increased number of guards. But, if we dissect along $\overline{v_c v_3}$ then the line segment $\overline{v_1 v_3}$ can be removed from $P_1$ (because it is guarded by the guard of $v_1' v_3 v_c v_2 v_d$) and $P_b$ can be joined with the updated $P_c'$. The size of $(P_b \cup P_c' \cup\{v_c\}) \setminus\{v_4'\}$ is $3(k_b+k_{c'})+2$, so the number of guards of $P'$ does not increase as a result of adjusting the partition. However, the current partition of $P_c'$ may no longer be relevant and may require repartition as part of the polygon $(P_b \cup P_c' \cup\{v_c\}) \setminus\{v_4'\}$. 

\begin{figure}
\centering
\subfigure[]{%
		\label{fig:PointKernel4}%
		\includegraphics[width=0.34\textwidth]{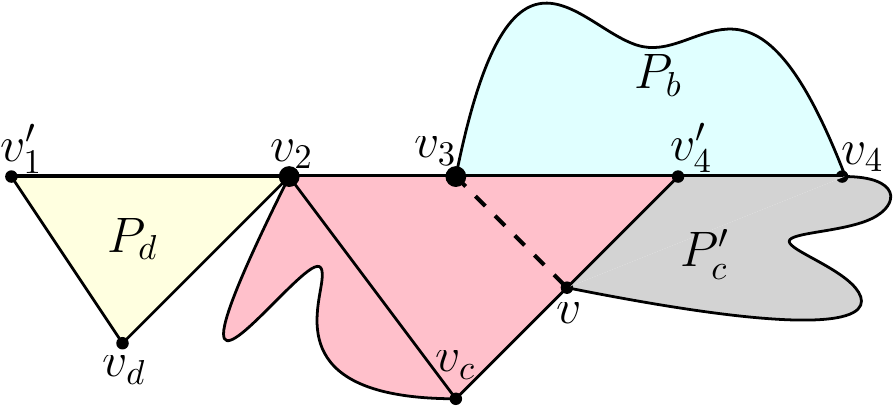}}
\hspace{0.1\textwidth}
\subfigure[]{%
		\label{fig:PointKernel5}%
		\includegraphics[width=0.34\textwidth]{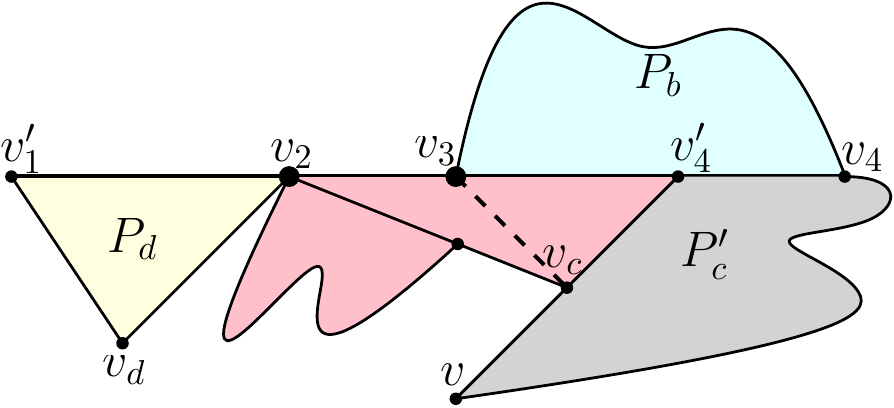}}
\caption{We avoid dissecting along $\overline{v_c v_4'}$.}
\label{fig:PointKernel45}
\end{figure}

{\color{blue} \textbf{Case 2.2:}} $|P_b| = 3k_b$. Then $|P_a| = 3k_a$. If $|P_c| = 3k_c$ and thus $|P_d| = 3k_d$ then $\overline{v_2 v_3}$ is a good diagonal dissection, which is a contradiction. Thus either $|P_c| = 3k_c+1$ and $|P_d| = 2$ or $|P_d| = 3k_d+1$ and $|P_c| = 2$. Those cases are symmetrical to the the case 2.1.

\subsection{$n = 3k+2$ and $P'$ has no good diagonal dissection}
\label{subsec:Partition}
\pdfbookmark[2]{$n = 3k+2$ and $P'$ has no good diagonal dissection}{subsec:Partition}

In this subsection we assume that $n = 3k+2$ and every diagonal of $P'$ decomposes it into polygons of size $n_1 = 3k_1$ and $n_2 = 3k_2+1$, where $k_1 + k_2 = k + 1$. Let $\mathcal{T}$ be a fixed triangulation of $P'$, and let $G(\mathcal{T})$ be the dual graph of $\mathcal{T}$. Notice that $G(\mathcal{T})$ has $n-2$ nodes. (Later, during the algorithm, we may change this triangulation.)

The proofs of the following Propositions~\ref{TothProp4} and~\ref{TothProp5} are identical to those in~\cite{Toth2000121}. We prove the following Lemmas~\ref{TothLem1},~\ref{TothLem2} and~\ref{TothLem3} in this paper.

\begin{mydefP}
\label{TothProp4}
$G(\mathcal{T})$ has exactly $k+1$ leaves.
\end{mydefP}

\begin{mydefL}
\label{TothLem1}
If $P'$ has $n = 3k+2$ vertices and has no good dissection then $P'$ has at most $k$ reflex angles.
\end{mydefL}

\begin{mydefP}
\label{TothProp5}
If $P'$ has at most $k$ reflex angles, then $k$ $180^\circ$-guards can monitor $P'$.
\end{mydefP}

\begin{mydefL}
\label{TothLem2}
If $P'$ has size $n = 3k+2$ then at least one of the following two statements is true:
\begin{enumerate}
\item $P'$ has a good dissection.
\item For every triangle $\triangle A B C$ in $\mathcal{T}$ that corresponds to a leaf of $G(\mathcal{T})$ with $AC$ being a diagonal of $P'$, either $\angle A < \pi$ or $\angle C < \pi$ in $P'$.
\end{enumerate}
\end{mydefL}

\begin{mydefL}
\label{TothLem3}
If a convex angle of $P'$ is associated to two leaves in $G(\mathcal{T})$, then $\left \lfloor \frac{n}{3} \right \rfloor$ $180^\circ$-guards can monitor~$P'$. 
\end{mydefL}

\subsubsection{Adaptation of T{\'o}th's Lemma~\ref{TothLem3} to our problem} 
\label{subsubsec:Lemma3}
\pdfbookmark[3]{Adaptation of T{\'o}th's Lemma~\ref{TothLem3} to our problem.}{subsubsec:Lemma3}

\hfill\\
\\
Let $v_1$ be a vertex at a convex angle of $P'$ associated to two leaves in $G(\mathcal{T})$ (refer to Fig.~\ref{fig:TothLemma3}). Let $P_1$ be the polygon formed by all triangles of $\mathcal{T}$ adjacent to $v_1$. Notice that $P_1$ is a fan; $v_1$ is a center of the fan; and the dual of the triangulation of $P_1$,  inherited from $P'$, is a path of nodes. Recall that the center of a fan $P_f$ is a vertex of $P_f$ that can see all other vertices of $P_f$.

\begin{wrapfigure}{r}{0.6\textwidth}
\centering
\includegraphics[width=0.58\textwidth]{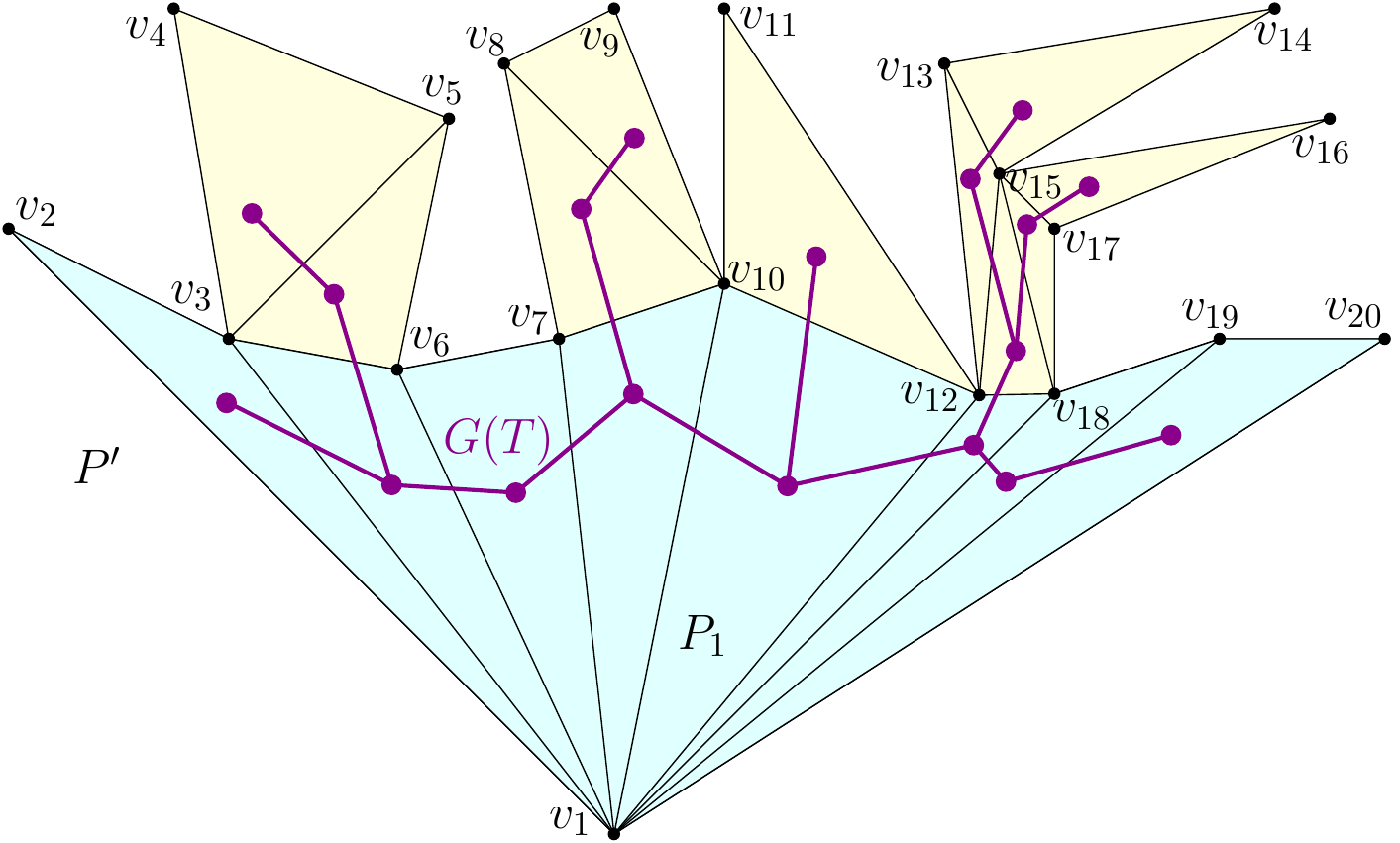}
\caption{$P'$ has size $n = 20 = 3k + 2$ for $k = 6$, and does not have good diagonal dissection. $G(\mathcal{T})$ is drawn on top of $\mathcal{T}$. $P_1$ is a fan with dominant point $v_1$ and it is highlighted in cyan.}
\label{fig:TothLemma3}
\vspace{-30pt}
\end{wrapfigure}

The original proof of T{\'o}th's Lemma~\ref{TothLem3} still holds for polygons*. 
However, this is one of the places where a $180^\circ$-guard is explicitly positioned at $v_1$ to monitor all the triangles of $\mathcal{T}$ adjacent to $v_1$. If $kernel(P_1)$ is not a single point, then the positioning can be reused for our problem. We can put a pair of towers on the same edge of $P'$ in $kernel (P_1) \cap \partial P_1$ to locate an agent in $P_1$. However, since there can be degenerate triangles among those adjacent to $v_1$, the kernel of $P_1$ can degenerate into a single point: $kernel (P_1) = v_1$. In this case, the location of an agent in $P_1$ cannot be determined with a pair of towers only. To overcome this problem we prove the following lemma.

\begin{lemma}
\label{lem:TL3}
There exist a triangulation of $P'$ such that its part, inherited by $P_1$ (i.e. triangles incident to $v_1$), does not contain any degenerate triangle.
\end{lemma}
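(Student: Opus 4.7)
The plan is to start from an arbitrary triangulation $\mathcal{T}$ of $P'$ and iteratively eliminate degenerate triangles incident to $v_1$ by local edge flips, ending at a triangulation whose fan at $v_1$ is degeneracy-free.

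First I would observe that in any triangulation of $P'$ the triangles incident to $v_1$ form a fan whose other vertices, say $u_1,\ldots,u_m$ in angular order around $v_1$, produce the fan triangles $\triangle v_1 u_i u_{i+1}$. Such a triangle is degenerate exactly when $u_i$ and $u_{i+1}$ lie on a common ray from $v_1$, with one of them strictly between $v_1$ and the other. These collinearities can arise in $P'$ because earlier partition steps introduce new vertices along existing lines; however, after the simplification step no vertex of $P'$ has angle $\pi$, and in particular no vertex of $P'$ can lie strictly in the interior of a boundary edge of $P'$ incident to $v_1$.

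The key local operation is the following edge flip. Suppose $\triangle v_1 u u'$ is a degenerate fan triangle with $u$ strictly between $v_1$ and $u'$, and let $u''$ be the fan vertex immediately following $u'$ in the angular order, so that $\triangle v_1 u' u''$ is the adjacent fan triangle. The diagonal $\overline{v_1 u'}$ of $\mathcal{T}$ is shared between $\triangle v_1 u u'$ and $\triangle v_1 u' u''$, and because of the collinearity the union of these two triangles is the triangle $\triangle v_1 u' u''$ with $u$ lying on its edge $\overline{v_1 u'}$. The alternative diagonal $\overline{u u''}$ of this configuration lies entirely inside this triangle, so the flip $\overline{v_1 u'}\leftrightarrow \overline{u u''}$ is valid: it replaces the two original triangles by $\triangle v_1 u u''$ (a new fan triangle at $v_1$) and $\triangle u u' u''$ (no longer incident to $v_1$). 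Crucially, $u'$ has been removed from the list of fan vertices at $v_1$.

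Each flip strictly decreases the number of fan vertices incident to $v_1$, so the process terminates in finitely many steps. At termination no degenerate triangle can be incident to $v_1$, since any such triangle would still admit an applicable flip. The only possible obstruction would be if the vertex $u'$ we wish to remove were a boundary-neighbor of $v_1$ in $P'$, so that $\overline{v_1 u'}$ is a boundary edge rather than a flippable diagonal; but this would force $u$ to lie in the interior of a boundary edge incident to $v_1$, giving angle $\pi$ at $u$ and contradicting the simplification step. The main subtlety I expect is confirming that $\overline{u u''}$ is always a legitimate new diagonal of $\mathcal{T}$, but this follows from the fact that the ``quadrilateral'' $v_1 u u' u''$ degenerates to the triangle $\triangle v_1 u' u''$ with $u$ on one edge, which forces $\overline{u u''}$ to lie strictly inside that triangle and hence inside the region already covered by the two old triangles of $\mathcal{T}$.
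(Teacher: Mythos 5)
Your overall strategy---removing degenerate triangles from the fan at $v_1$ by flipping the diagonal to the far collinear vertex---is the same flip that the paper uses, but your write-up has two gaps, and they correspond exactly to the two things the paper's proof is careful about. First, your argument that the flipped diagonal $\overline{v_1 u'}$ can never be a boundary edge is not valid: in a polygon* (a weakly simple polygon) a vertex $u$ may lie in the relative interior of a non-adjacent boundary edge without having angle $\pi$ there, because the angle at $u$ is measured between $u$'s \emph{own} incident edges, not against the edge it happens to sit on; so the simplification step does not rule this configuration out the way you claim. The correct reason the obstruction cannot occur is the observation the paper opens with: the two extreme triangles of the fan at $v_1$ are precisely the two leaves of $G(\mathcal{T})$ associated with $v_1$ (this is the standing hypothesis of T{\'o}th's Lemma~\ref{TothLem3}), and a leaf triangle has two consecutive boundary edges, so if it were degenerate its middle vertex would have angle $0$, $\pi$ or $2\pi$ and would have been removed by the simplification step. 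Hence every degenerate fan triangle is interior to the fan path and both of its $v_1$-diagonals are genuinely flippable. You never invoke the leaf structure, and without it your ``obstruction'' case is not actually excluded.

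Second, your justification that $\overline{u u''}$ is a legitimate new diagonal relies on the union of the two merged triangles being the honest triangle $\triangle v_1 u' u''$ with $u$ on one side---i.e., it assumes the neighbor $\triangle v_1 u' u''$ across the far diagonal is non-degenerate. When two or more degenerate fan triangles are adjacent (all their non-$v_1$ vertices lie on one ray from $v_1$), the neighbor across $\overline{v_1 u'}$ can itself be degenerate, the union of the two triangles collapses to a segment, and your convexity/containment argument for the flip no longer applies; moreover the resulting fan triangle $\triangle v_1 u u''$ is still degenerate, so the claim ``at termination no degenerate triangle remains'' needs the flip to be valid in exactly this case you have not justified. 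The paper handles this by a separate first case: it shows that a maximal run of adjacent degenerate triangles is enclosed between two non-degenerate triangles $\triangle_1$ and $\triangle_2$, that the closest collinear vertex $u_1$ must lie on one of the two bounding diagonals (using the absence of $2\pi$ angles), and it re-triangulates the collinear cluster so that only a single degenerate triangle $\triangle v_1 u_1 u_j$ remains incident to $v_1$, with its far diagonal shared with the non-degenerate $\triangle_2$; only then is your flip applied. You need either to supply that consolidation step or to prove directly that the flip remains a valid re-triangulation (in the paper's extended sense of overlapping diagonals) when both merged triangles are degenerate.
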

\begin{proof} 
Let us first observe that a triangle of $\mathcal{T}$, corresponding to a leaf in $G(\mathcal{T})$, cannot be degenerate due to the simplification step performed on $P'$. There are two cases to consider:

{\color{red} \textbf{Case 1:}} Suppose that the triangulation/fan of $P_1$ contains two or more degenerate triangles adjacent to each other. Let $v_1, u_1, u_2, \ldots, u_i$ for some $i > 2$ be the vertices of the degenerate triangles sorted according to their distance from $v_1$. Notice that $v_1, u_1, u_2, \ldots, u_i$ belong to the same line. Those degenerate triangles must be enclosed in $P_1$ between a pair of non-degenerate triangles, let us call them $\triangle_1$ and $\triangle_2$. Since $P'$ does not have angles of size $2\pi$, the diagonal $\overline{v_1 u_1}$ must be shared with one of $\triangle_1$ or $\triangle_2$. Assume, without loss of generality, that $u_1$ is a vertex of $\triangle_1$. Let $u_j$ be a vertex of $\triangle_2$, for some $1 < j \leq i$. It is possible to re-triangulate $P'$ such that $P_1$ will contain only one degenerate triangle $\triangle v_1 u_1 u_j$ between $\triangle_1$ and $\triangle_2$ and other triangles of $P_1$ (that are not between $\triangle_1$ and $\triangle_2$) will not be affected. The shape of $P_1$ may or may not change but its size will decrease.

{\color{red} \textbf{Case 2:}} Suppose that the triangulation/fan of $P_1$ contains one degenerate triangle $\triangle v_1 u_1 u_j$ enclosed between a pair of non-degenerate triangles $\triangle_1$ and $\triangle_2$. Assume, without loss of generality, that $\triangle_2$ shares a diagonal $\overline{v_1 u_j}$ with $\triangle v_1 u_1 u_j$. Let $w$ be the third vertex of $\triangle_2$, so $\triangle_2 = \triangle v_1 u_j w$. Notice that $u_1$ and $w$ see each other, because $u_1$ belongs to the line segment $\overline{v_1 u_j}$. We can flip the diagonal $\overline{v_1 u_j}$ into $\overline{u_1 w}$ in $\mathcal{T}$. As a result, $P_1$ will now contain $\triangle v_1 u_1 w$ instead of $\triangle v_1 u_1 u_j$ and $\triangle_2$. Notice that $\triangle v_1 u_1 w$ is non-degenerate. 

We showed that we can obtain a triangulation of $P'$ in which all the triangles incident to $v_1$ are non-degenerate. Thus, $P_1$ will have no degenerate triangles and still contain vertex $v_1$ together with two leaves of $G(\mathcal{T})$ associated with $v_1$.
\qed
\end{proof}

\subsubsection{Proof of T{\'o}th's Lemma~\ref{TothLem2} and its adaptation to our problem}
\label{subsubsec:Lemma2}
\pdfbookmark[3]{Proof of T{\'o}th's Lemma~\ref{TothLem2} and its adaptation to our problem.}{subsubsec:Lemma2}
\hfill\\
\\
Section~\ref{subsubsec:Lemma2} is devoted to the proof of T{\'o}th's Lemma~\ref{TothLem2}.

T{\'o}th defines two types of elements of $G(\mathcal{T})$. A leaf of $G(\mathcal{T})$ is called a \emph{short leaf} if it is adjacent to a node of degree $3$. If 
a leaf of $G(\mathcal{T})$ is adjacent to a node of degree $2$ then this leaf is called a \emph{long leaf}. Since we are under the assumption that $n = 3k+2$ and $P'$ has no good diagonal dissection then the node of $G(\mathcal{T})$ adjacent to a long leaf is also adjacent to a node of degree~$3$.

\begin{wrapfigure}{r}{0.32\textwidth}
\centering
\includegraphics[width=0.3\textwidth]{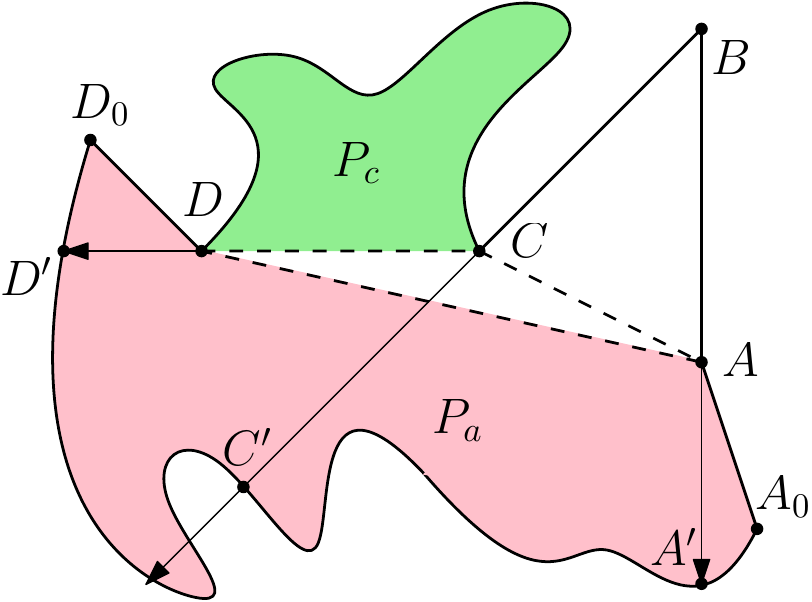}
\caption{$\triangle ABC$ corresponds to a short leaf in $G(\mathcal{T})$. $\triangle ACD$ can be degenerate.}
\label{fig:TothShortLeaf_1}
\vspace{-20pt}
\end{wrapfigure}
In this subsection we keep T{\'o}th's original names and notations to simplify cross-reading.

The angle $\angle ABC$ is the angle that the ray $\overrightarrow{BA}$ makes while rotating counter-clockwise towards the ray $\overrightarrow{BC}$. For example, the angle of $P'$ at $B$ is $\angle CBA$ (refer to Fig.~\ref{fig:TothShortLeaf_1}).

Let $\triangle ABC$ correspond to a short leaf in $G(\mathcal{T})$, where $\overline{AC}$ is a diagonal of $P'$, and let $\triangle ACD$ correspond to a node in $G(\mathcal{T})$ adjacent to the leaf $\triangle ABC$. Refer to Fig.~\ref{fig:TothShortLeaf_1}. Notice that $\triangle ABC$ cannot be degenerate, otherwise the vertex $B$ would be deleted during the simplification step. However, $\triangle ACD$ can be degenerate. The diagonals $\overline{AD}$ and $\overline{CD}$ decompose $P'$ into polygons $P_a$, $ABCD$ and $P_c$, where $A \in P_a$ and $C \in P_c$. Let $n_a$ be size of $P_a$ and $n_c$ be size of $P_c$.

\begin{mydefC}
\label{claimT1}
$n_a = 3 k_a + 1$ and $n_c = 3 k_c + 1$.
\end{mydefC}

\begin{mydefC}
\label{claimT2}
$ABCD$ is a non-convex quadrilateral, i.e. it has a reflex vertex at $A$ or $C$.
\end{mydefC}

The original proof of both claims (refer to~\cite{Toth2000121}) can be reused for polygons*, 
so we do not present their proofs here. However, T{\'o}th's Claim~\ref{claimT3} (presented below) requires a slightly different proof to hold true for polygons*. In addition, there was a mistake in the original proof of T{\'o}th's Claim~\ref{claimT3} that we fixed here. But first we would like to clarify the framework we work in, to explain tools and assumptions we use.

Assume, without loss of generality, that the reflex vertex of $ABCD$ is at $C$ (which exists by T{\'o}th's Claim~\ref{claimT2}). Assume that $\mathcal{T}$ is the triangulation of $P'$ in which $n_a$ is \textbf{minimal}. This assumption together with the fact that $P'$ does not have a good diagonal dissection implies that there does not exist a vertex of $P_a$ in the interior of a line segment $\overline{DA}$. By T{\'o}th's Claim~\ref{claimT1}, $n_a \geq 4$ (notice that $n_a \neq 1$ because $A$, $D \in P_a$). Let $A_0$ and $D_0$ be vertices of $P_a$ adjacent to $A$ and $D$ respectively (refer to Fig.~\ref{fig:TothShortLeaf_1}). By $\overrightarrow{uv}$ we denote the ray that starts at $u$ and passes through $v$. Let $A'$ be a point such that $A' \in \partial P' \cap \overrightarrow{BA}$ and there exists a point $A'' \in \partial P'$ strictly to the \textbf{right} of $\overrightarrow{BA}$ such that $A'$ and $A''$ belong to the same edge of $P'$ and both visible to $B$ and $A$. Let $D'$ be a point such that $D' \in \partial P' \cap \overrightarrow{CD}$ and there exist a point $D'' \in \partial P'$ strictly to the \textbf{left} of $\overrightarrow{CD}$ such that $D'$ and $D''$ belong to the same edge of $P'$ and both visible to $C$ and $D$. Let $C'$ be a point defined similarly to $D'$ but with respect to the ray $\overrightarrow{BC}$.
Notice that if $A=A'$ then $\angle A = \angle B A A_0 < \pi$ in $P'$ and thus the second condition of T{\'o}th's Lemma~\ref{TothLem2} holds. Therefore, assume that $A \neq A'$.

\begin{mydefC}
\label{claimT3}
The points $A'$ and $D'$ belong to the same edge of $P'$.
\end{mydefC}
\begin{proof}
Let us rotate the ray $\overrightarrow{CA'}$ around $C$ in the direction of $D'$. Notice that in the original proof, T{\'o}th uses $\overrightarrow{CA}$. However, there are polygons (even in general position) for which T{\'o}th's proof does not hold. For example, when the ray $\overrightarrow{CA}$ hits $A_0$, then T{\'o}th claims that $\angle BAA_0 < \pi$. Figure~\ref{fig:TothShortLeaf_1} can serve as a counterexample to this claim, because $A_0$ is indeed the first point hit by the rotating ray $\overrightarrow{CA}$, however $\angle BAA_0 > \pi$.

The structure of the original proof can be used with respect to $\overrightarrow{CA'}$, assuming that $A'$ is visible to $C$.
Thus, assume first that $C$ and $A'$ can see each other. We rotate $\overrightarrow{CA'}$ around $C$ in the direction of $D'$. Let $O$ be the first vertex of $P'$ visible from $C$ that was hit by the ray (in case there are several collinear such vertices of $P_a$, then let $O$ be the one that is closest to $C$). 

If $O = D$ then $A'$ and $D'$ belong to the same edge of $P'$ (notice that it is possible that $D' = D$), so the claim holds.

If $O = A_0$ then $A=A'$, $\angle A = \angle B A A_0 < \pi$ and thus the second claim in T{\'o}th's Lemma~\ref{TothLem2} holds.

If $O \neq D$ and $O \neq A_0$, then $\overline{AO}$ and $\overline{CO}$ are diagonals of $P'$. Refer to Fig.~\ref{fig:TothShortLeaf_2}. There exists a triangulation of $P'$ that contains $\triangle A C O$ and has $\triangle A B C$ as a short leaf. Consider quadrilateral $ABCO$. It is non-convex by T{\'o}th's Claim~\ref{claimT2}. By construction, $O$ is to the right of $\overrightarrow{BA}$, thus $\angle B A O < \pi$. It follows, that the only possible reflex angle in $ABCO$ is $\angle O C B$, which is a contradiction to the minimality of $P_a$ (recall, we assumed that we consider the triangulation of $P'$ in which $n_a$ is minimal), and thus, such a vertex $O$ does not exist.

\begin{figure}
\centerline{\resizebox{!}{0.25\textwidth}{\includegraphics{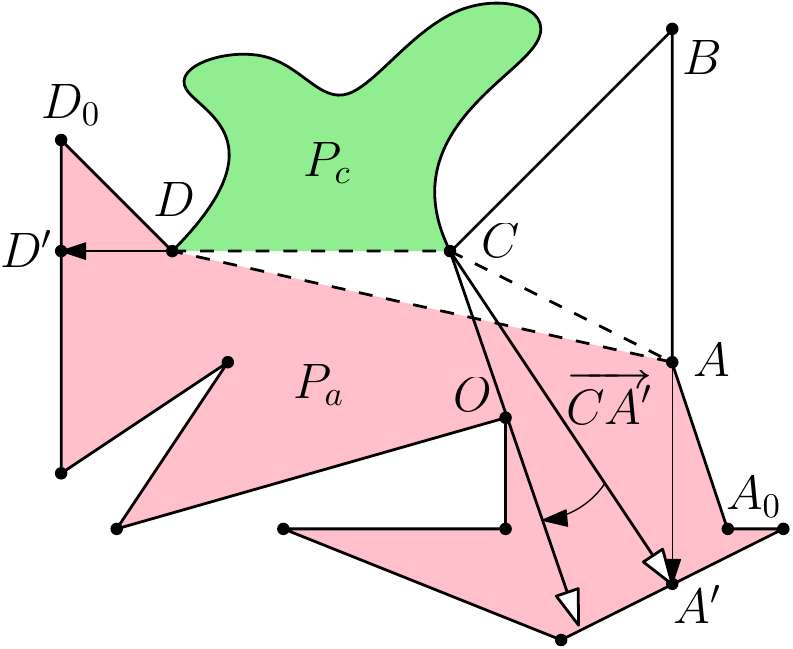}}}
\caption{$\triangle ABC$ corresponds to a short leaf in $G(\mathcal{T})$; $\triangle ACD$ can be degenerate. $\angle D_0 D C > \pi$.}
\label{fig:TothShortLeaf_2}
\end{figure}

Assume now that $A'$ is \textbf{not} visible to $C$. It follows that some part of $\partial P'$ belongs to the interior of $\triangle A A'C$. Recall, that $A'$ is defined as a point such that $A' \in \partial P' \cap \overrightarrow{BA}$ and there exists a point $A'' \in \partial P'$ strictly to the \textbf{right} of $\overrightarrow{BA}$ such that $A'$ and $A''$ belong to the same edge of $P'$ and both visible to $B$ and $A$. Thus, there is no intersection between the part of $\partial P'$ that belongs to the interior of $\triangle A A'C$ and $\overline{AA'}$. It follows, that $\partial P'$ crossed $\overline{CA'}$ at least twice, and thus there must be a vertex of $P_a$ interior to $\triangle A A'C$. Among all the vertices of $P_a$ that lie in $\triangle A A'C$, let $O'$ be the vertex that is closest to the line passing through $\overline{CA}$. Notice, that $O'$ is visible to $B$ (because $\triangle B A' C$ contains $\triangle A A'C$). Notice also that $\overline{AO'}$ and $\overline{CO'}$ are diagonals of $P'$. One of $\overline{AO'}$, $\overline{BO'}$ or $\overline{CO'}$ is a good diagonal dissection, which is a contradiction to the main assumption of Sect~\ref{subsec:Partition} that $P'$ has no good diagonal dissection. Thus, $A'$ is visible to $C$.
\qed
\end{proof}


It follows from T{\'o}th's Claim~\ref{claimT3} that the quadrilateral $A'BCD'$ has no common points with $\partial P'$ in its interior, but on the boundary only.

We have derived several properties satisfied by $P'$ and now we are ready to show the existence of good (non-diagonal) dissections that consist of one or two line segments. We discuss the following three cases, that span over Claims $4,5$ in~\cite{Toth2000121}.

{\color{red} \textbf{Case 1:}} $\angle D_0 D C < \pi$. In this case $D = D'$. It follows from T{\'o}th's Claim~\ref{claimT3} that $A', C' \in \overline{D D_0}$. Refer to Fig.~\ref{fig:TothShortLeaf_3}.

Line segment $\overline{CC'}$ represents a good dissection that splits $P'$ into two polygons: $P_c \cup \triangle CC'D$ of size $n_c + 1 = 3 k_c + 2$ and $(P' \setminus P_c) \setminus \triangle CC'D$ of size $n_a + 1 = 3 k_a + 2$. Notice that $\overline{BC'}$ is an edge of the subpolygon of $P'$ to the left of $\overrightarrow{BC}$ and thus $C$ is not a vertex of this subpolygon. If $\triangle ACD$ is degenerate (in which case $C$ is between $A$ and $D$), then $\overline{CC'}$ is still a good dissection. However, if $D_0$ is also collinear with $D$, $C$ and $A$, then $D_0$ is visible to $C$ and $\overline{CD_0}$ represents a good diagonal dissection. 

\begin{figure}
\centerline{\resizebox{!}{0.24\textwidth}{\includegraphics{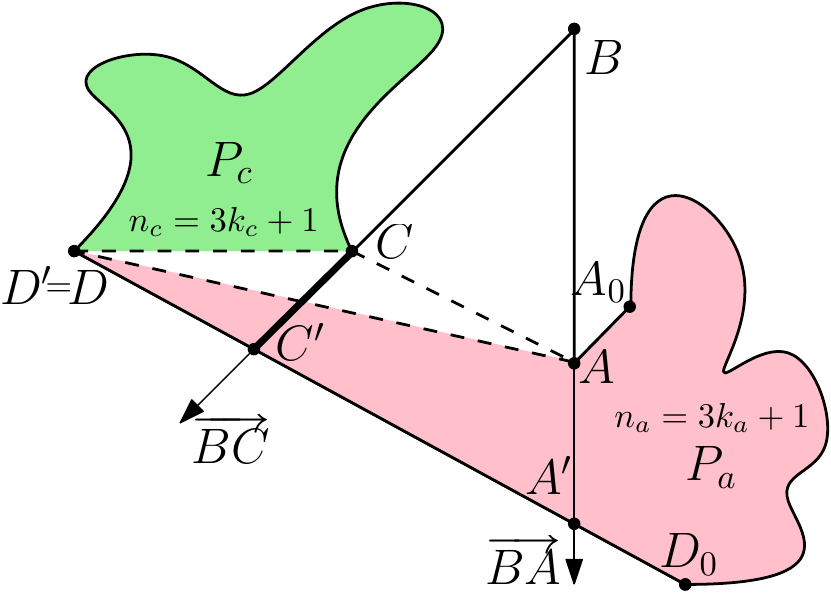}}}
\caption{$\triangle ABC$ corresponds to a short leaf in $G(\mathcal{T})$; $\triangle ACD$ can be degenerate. $\angle D_0 D C < \pi$; $\overline{CC'}$ is a good dissection.}
\label{fig:TothShortLeaf_3}
\end{figure}

{\color{red} \textbf{Case 2:}} $\angle D_0 D C = \pi$. In this case $D_0 = D'$ and $\overline{CD}$ is a good diagonal dissection that splits $P'$ into $P_c$ of size $n_c = 3 k_c + 1$ and $P_a \cup ABCD$ of size $n_a + 1= 3 k_a + 2$. Notice that $\overline{CD_0}$ is an edge in $P_a \cup ABCD$ and thus $D$ is not a vertex in $P_a \cup ABCD$.

{\color{red} \textbf{Case 3:}} $\angle D_0 D C > \pi$.  Let $D_0'$ be the point closest to $D$ where the ray $\overrightarrow{D_0 D}$ reaches $\partial P'$. If the line segments $\overline{CC'}$ and $\overline{DD_0'}$ intersect inside the quadrilateral $C A A' D'$ at $Q$ (refer to Fig.~\ref{fig:TothShortLeaf_4}), then $\overline{DQ} \cup \overline{QC}$ is a good dissection, that splits $P'$ into polygon $P_c \cup \triangle CDQ$ of size $n_c + 1 = 3 k_c + 2$ and polygon $P_a \cup DQBA$ of size $n_a + 1 = 3 k_a + 2$. 
However, if $C A A' D'$ degenerates into a line segment (which happens when $\triangle ACD$ is degenerate and there exists an edge $\overline{IJ}$ of $P_a$ that contains $\overline{AD}$) then $Q$ cannot be defined. Refer to Fig.~\ref{fig:TothShortLeaf_5}. 
\begin{figure}
\centering
\subfigure[]{%
		\label{fig:TothShortLeaf_4}%
		\includegraphics[width=0.35\textwidth]{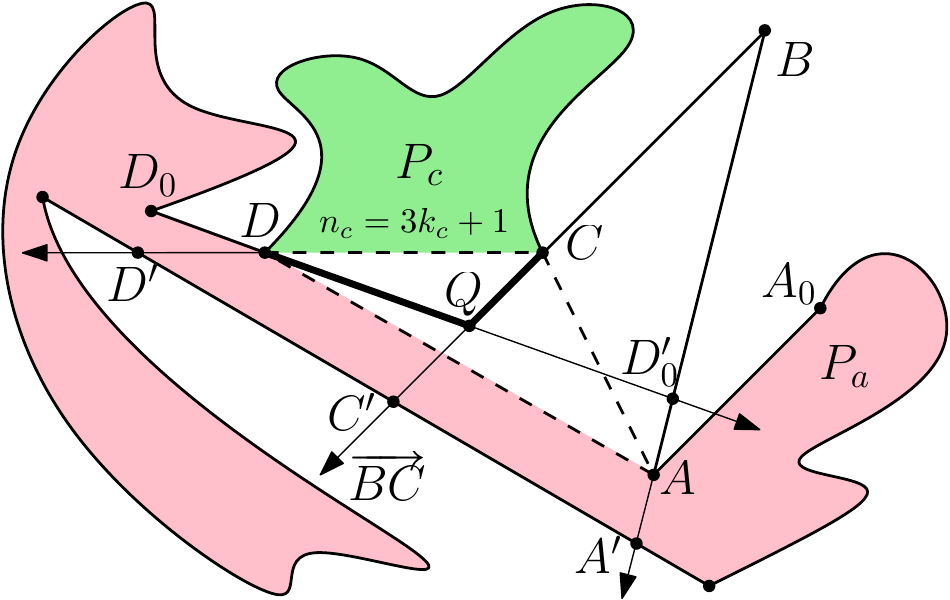}}
\hspace{0.1\textwidth}
\subfigure[]{%
		\label{fig:TothShortLeaf_5}%
		\includegraphics[width=0.36\textwidth]{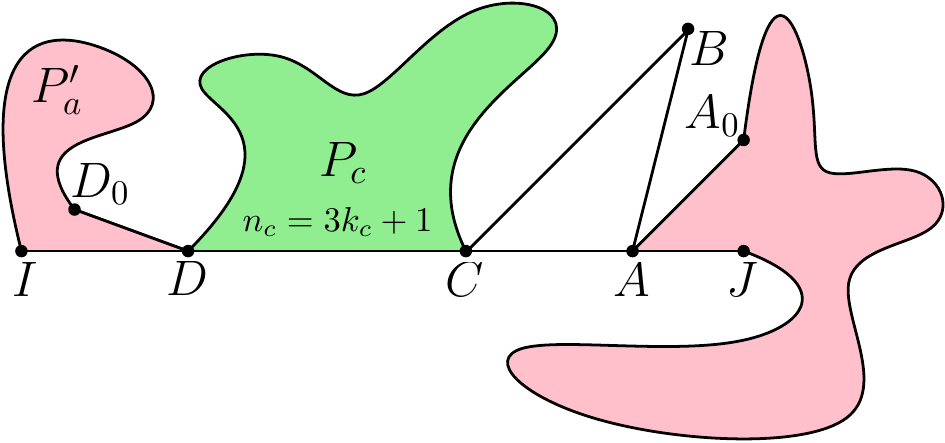}}
\caption{$\triangle ABC$ corresponds to a short leaf in $G(\mathcal{T})$; $\angle D_0 D C > \pi$. \textbf{(a)} $\overline{DQ} \cup \overline{QC}$ is a good dissection of $P'$. \textbf{(b)} $\triangle ACD$ is degenerate; the edge $\overline{IJ}$ of $P_a$ contains $\overline{DA}$. $P_a$ is highlighted in pink. Notice that $C$ is not a vertex of $P_a$. $P_a'$ is a subpolygon of $P_a$ that contains $D_0$ and $\overline{ID}$ is its edge.}
\label{fig:TothShortLeaf_45}
\end{figure}
In this case we show that $P'$ has a good diagonal dissection. Notice that $\overline{ID}$ is a diagonal of $P_a$; it splits $P_a$ into two subpolygons. Let $P_a'$ be a subpolygon of $P_a$ that contains $D_0$. Let $n_a'$ be the size of $P_a'$. We consider three cases:
\begin{itemize}
\item[$\blacktriangleright$] $n_a' = 3k_a'$: In this case $\overline{IA}$ is a good diagonal dissection. The size of $P_a' \cup P_c \cup \triangle ABC$ is $3k_a' + 3k_c + 1 + 3 - 2 = 3(k_a' + k_c) +2$. Notice that the ``$-2$'' in the previous formula stands for vertices $D$ and $C$ that were counted twice. 
\item[$\blacktriangleright$] $n_a' = 3k_a' + 1$: In this case $\overline{CJ}$ is a good diagonal dissection. The size of $P_a' \cup P_c \cup IDCJ$ is $3k_a' + 1 + 3k_c + 1 + 1 - 1 = 3(k_a' + k_c) +2$.
\item[$\blacktriangleright$] $n_a' = 3k_a' + 2$: In this case $\overline{ID}$ is a good diagonal dissection.
\end{itemize}

In this subsection we assumed that $P'$ has no good diagonal dissection, thus we deduce that $C A A' D'$ cannot degenerate into a line segment. Notice that $Q$ exists even when $\triangle ACD$ is degenerate.

\begin{wrapfigure}{r}{0.42\textwidth}
\centering
\includegraphics[width=0.36\textwidth]{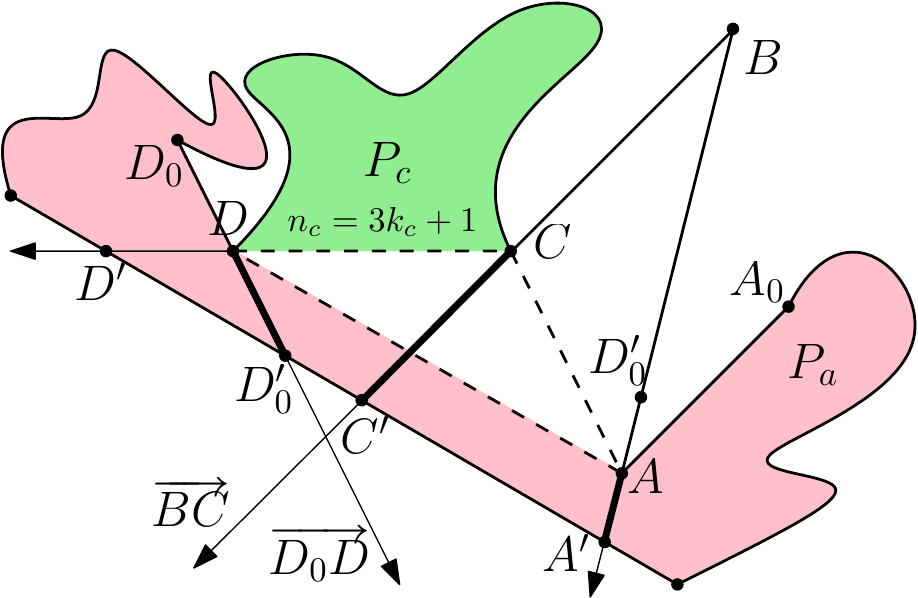}
\caption{$\triangle ABC$ corresponds to a short leaf in $G(\mathcal{T})$; $\angle D_0 D C > \pi$; $\triangle ACD$ can be degenerate. One of $\overline{DD_0'}$, $\overline{CC'}$ and $\overline{AA'}$ is a good dissection of $P'$.} 
\label{fig:TothShortLeaf_6}
\vspace{-20pt}
\end{wrapfigure}
If $\overline{CC'}$ and $\overline{DD_0'}$ do not intersect inside $C A A' D'$, then $D_0'$ belongs to the line segment $C'D'$. Refer to Fig.~\ref{fig:TothShortLeaf_6}. T{\'o}th shows in Claim $5$ in~\cite{Toth2000121} that one of the line segments  $\overline{DD_0'}$, $\overline{CC'}$ and $\overline{AA'}$ is a good dissection.

It was shown so far that if $\triangle ABC$ is a short leaf in $G(\mathcal{T})$, then either $P'$ has a good dissection or the angle at vertex $A$ or $C$ in $P'$ is convex. It is left to prove that T{\'o}th's Lemma~\ref{TothLem2} is true for long leaves. Notice that if $\triangle ABC$ is a long leaf in $G(\mathcal{T})$ but there exists a triangulation of $P'$ where $\triangle ABC$ is a short leaf then T{\'o}th's Lemma~\ref{TothLem2} is true for $\triangle ABC$.

\medskip
Let $\triangle ABC$ be a long leaf of $G(\mathcal{T})$ such that there does not exist a triangulation of $P'$ where $\triangle ABC$ is a short leaf. Recall that in this subsection, we assumed that $P'$ has no good diagonal dissection. Thus, the node of $G(\mathcal{T})$ adjacent to a long leaf is also adjacent to a node of degree $3$. We also concluded that $\triangle ABC$ cannot be degenerate.

\stepcounter{mydefC}
\stepcounter{mydefC}
\begin{mydefC}
\label{claimT6}
If $\triangle ABC$ is a long leaf of $G(\mathcal{T})$ for every triangulation $\mathcal{T}$ of $P'$ then the node of $G(\mathcal{T})$ adjacent to the node $\triangle ABC$ corresponds to the same triangle for every~$\mathcal{T}$.
\end{mydefC}

T{\'o}th's Claim~\ref{claimT6} is true for a triangulation $\mathcal{T}$ that contains degenerate triangles and thus it is true for the $P'$ defined in this paper. 

Let $\triangle ACD$ be a triangle adjacent to $\triangle ABC$ in $\mathcal{T}$. The ray $\overrightarrow{CA}$ (respectively $\overrightarrow{CD}$, $\overrightarrow{BC}$, $\overrightarrow{BA}$) reaches $\partial P'$ at $A'$ (respectively $D'$, $C'$, $B'$). Notice that $A'$ is defined differently than in the case with short leaves. Refer to Fig.~\ref{fig:TothShortLeaf_7}.
By T{\'o}th's Claim~\ref{claimT6}, $\triangle ACD$ is unique. Notice that $\triangle ACD$ can be degenerate, but because it is unique, it does not contain any other vertex of $P'$. Moreover, there does not exist an edge $\overline{IJ}$ of $P'$ that contains $\triangle ACD$, otherwise $P'$ has a good diagonal dissection since $A$, $C$, $D$, $I$ and $J$ can see each other. T{\'o}th's Claim~\ref{claimT7}, that follows this discussion, is thus true for polygons whose triangulation may contain degenerate triangles.

\begin{figure}
\centering
\includegraphics[width=0.5\textwidth]{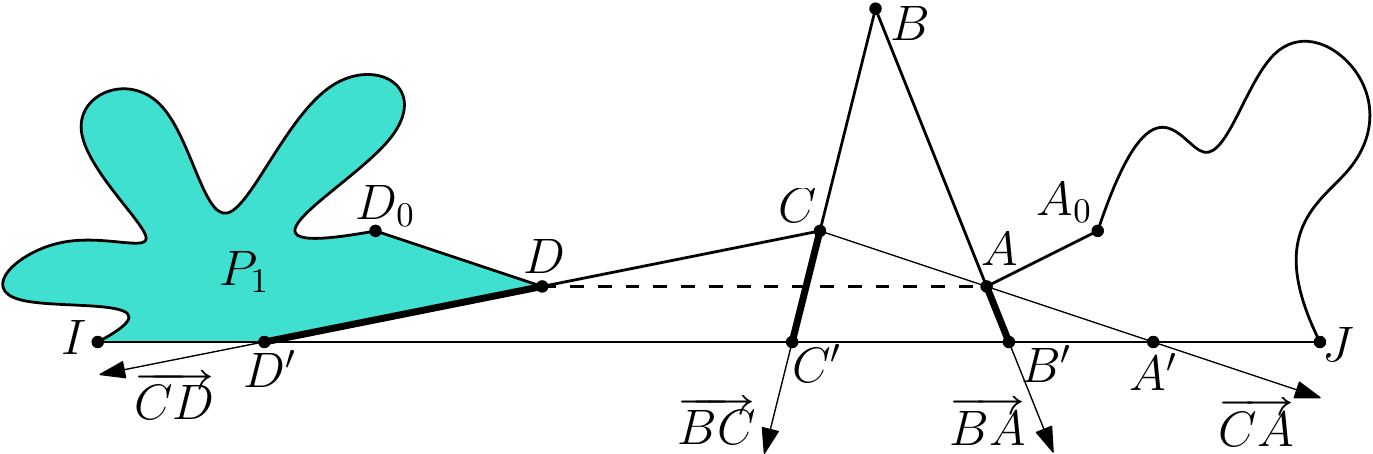}
\caption{$\triangle ABC$ corresponds to a long leaf in $G(\mathcal{T})$. $\overline{IJ}$ is an edge of $P'$; it contains $A'$, $B'$, $C'$ and $D'$. One of $\overline{DD'}$, $\overline{CC'}$ or $\overline{AB'}$ is a good dissection of $P'$. $P_1$ is highlighted in cyan.} 
\label{fig:TothShortLeaf_7}
\end{figure}

\begin{mydefC}
\label{claimT7}
The points $A'$ and $D'$ belong to the same edge of $P'$ or the angle of $P'$ at $A$ is convex and thus the second condition of T{\'o}th's Lemma~\ref{TothLem2} holds for $\triangle ABC$. Refer to Fig.~\ref{fig:TothShortLeaf_7}.
\end{mydefC}

It follows that $C'$ and $B'$ belong to the same edge as $A'$ and $D'$.

Assume that the angles of $P'$ at $A$ and $C$ are reflex, otherwise the second condition of T{\'o}th's Lemma~\ref{TothLem2} holds for $\triangle ABC$ and our proof is complete. 

Consider the angle of $P'$ at $D$. It can be either convex or reflex. Notice that non of the angles of $P'$ equals $\pi$ or $0$ because of the simplification step. We discuss the following two cases, that span over Claims $8,9$ in~\cite{Toth2000121}, and show that in either case $P'$ has a good dissection.

{\color{red} \textbf{Case 1:}} $\angle D_0 D C > \pi$. One of the line segments $\overline{DD'}$, $\overline{CC'}$ or $\overline{AB'}$ is a good dissection of $P'$. Refer to Fig.~\ref{fig:TothShortLeaf_7}. $\overline{DD'}$ partitions $P'$ into two subpolygons. Let $P_1$ be one of them that contains $D_0$ (it is highlighted in cyan on Fig.~\ref{fig:TothShortLeaf_7}). The size of $P_1$ is $n_1 = 3k_1 + q_1$; the size of $P_1 \cup CDD'C'$ is $n_1 + 1$; the size of $P_1 \cup CDD'C' \cup ACC'B'$ is $n_1 + 2$. If $q_1 = 2$ (respectively $q_1 = 1$, $q_1 = 0$) then $\overline{DD'}$ (respectively $\overline{CC'}$, $\overline{AB'}$) is a good dissection of $P'$.

{\color{red} \textbf{Case 2:}} $\angle D_0 D C < \pi$. It follows that $D'=D$, $A' \in \overline{DD_0}$ and $A' \neq D_0$, otherwise $\overline{AD_0}$ is a good diagonal dissection. Refer to Fig.~\ref{fig:TothShortLeaf_89}. Let $A_0'$ be the point where $\overrightarrow{A_0A}$ reaches $\partial P'$. T{\'o}th's Claim~\ref{claimT7} implies that $A_0' \in \overline{CD}$ or $A_0' \in \overline{DB'}$. If $A_0' \in \overline{DB'}$ (refer to Fig.~\ref{fig:TothShortLeaf_8}) then $\overline{AA_0'}$ is a good dissection of $P'$. It creates a pentagon $ABCDA_0'$ and a polygon $P' \setminus ABCDA_0'$ whose size is $n-3 = 3(k-1)-2$ (notice that $\overline{AA_0'}$ is an edge of $P' \setminus ABCDA_0'$ and thus $A \notin P' \setminus ABCDA_0'$). Notice that if $A_0' = D$ then $A_0$ can see $D$ and thus $\overline{A_0D}$ is a good diagonal dissection~of~$P'$. 

\begin{figure}
\centering
\subfigure[]{%
		\label{fig:TothShortLeaf_8}%
		\includegraphics[width=0.36\textwidth]{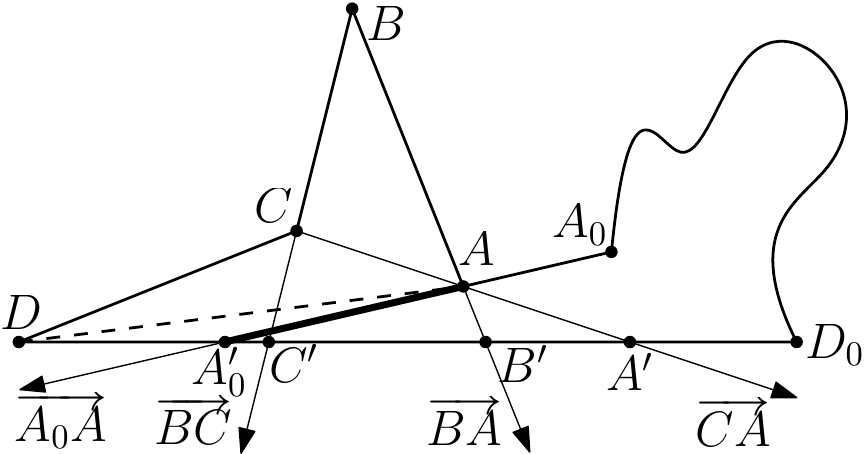}}
\hspace{0.1\textwidth}
\subfigure[]{%
		\label{fig:TothShortLeaf_9}%
		\includegraphics[width=0.4\textwidth]{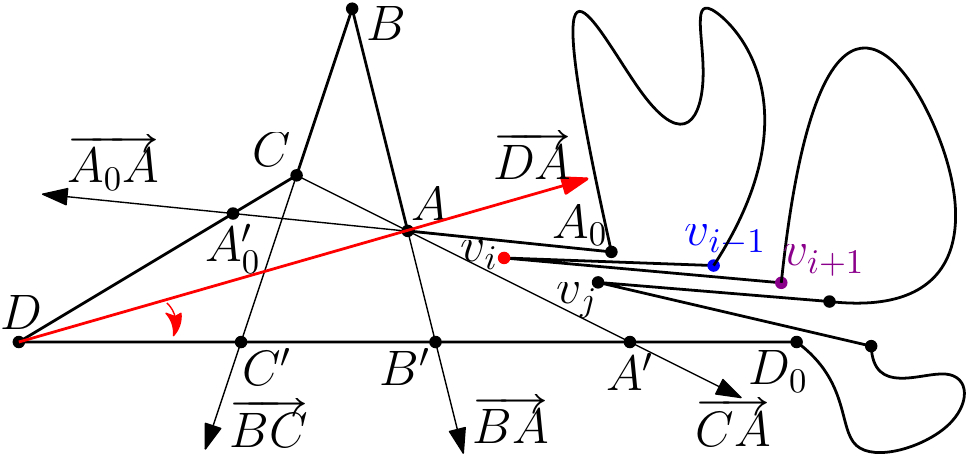}}
\caption{$\triangle ABC$ corresponds to a long leaf in $G(\mathcal{T})$; $\angle D_0 D C < \pi$. \textbf{(a)} Case where$A_0' \in \overline{DB'}$. $\overline{AA_0'}$ is a good dissection of $P'$. \textbf{(b)} Case where $A_0' \in \overline{CD}$.}
\label{fig:TothShortLeaf_89}
\end{figure}

Assume that $A_0' \in \overline{CD}$. Refer to Fig.~\ref{fig:TothShortLeaf_9}. Let us assign labels to the vertices of $P'$ according to their order around $\partial P'$ as follows: $v_0 = A$, $v_1 = A_0$, $v_2$, $\ldots$, $v_{n-4} = D_0$, $v_{n-3} = D$, $v_{n-2} = C$, $v_{n-1} = B$. $\overline{DA}$ is a diagonal of $P'$ and thus at least some interval of $\overline{AA_0}$ is visible to $D$. Let us rotate $\overrightarrow{DA}$ towards $A_0$. The ray hits $v_i$ for $1 < i \leq v_{n-5}$ (notice that the ray cannot hit $A_0$, otherwise  $\overline{DA_0}$ is a good diagonal dissection). Observe that the angle of $P'$ at $v_i$ must be reflex. Let $X_1$ (respectively $X_2$) be a point where $\overrightarrow{v_{i+1} v_i}$ (respectively $\overrightarrow{v_{i-1}v_i}$) reaches $\partial P'$. By construction, $X_1 \in \overline{AA_0}$ or $X_1 \in \overline{CD}$. The same is true for $X_2$. If $i$ is not a multiple of 3 then one of $\overline{v_i X_1}$ or $\overline{v_i X_2}$ is a good dissection. 
\begin{itemize}
\item[$\blacktriangleright$] $i \equiv 1$ mod $3$: 
\begin{itemize}
 \item $X_1 \in \overline{AA_0}$. Then the subpolygon $v_1 v_2 \ldots v_i X_1$ has size $3k'+2$, and the subpolygon $A X_1 v_{i+1} v_{i+2} \ldots v_{n-1}$ has size $3k''+2$ for some $k' +k'' = k$ (recall that $n = 3k+2$).
 \item $X_1 \in \overline{CD}$. Then the polygon $C B A v_1 v_2 \ldots v_i X_1$ has size $3k'+2$, and the subpolygon $X_1 v_{i+1} v_{i+2} \ldots v_{n-3}$ has size $3k''+2$ for some $k' +k'' = k$.
 \end{itemize} 
 \item[$\blacktriangleright$] $i \equiv 2$ mod $3$: 
\begin{itemize}
 \item $X_2 \in \overline{AA_0}$. Then the polygon $v_1 v_2 \ldots v_{i-1} X_2$ has size $3k'+2$, and the subpolygon $A X_2 v_{i} v_{i+1} \ldots v_{n-1}$ has size $3k''+2$ for some $k' +k'' = k$.
 \item $X_2 \in \overline{CD}$. Then the polygon $C B A v_1 v_2 \ldots v_{i-1} X_2$ has size $3k'+2$, and the subpolygon $X_2 v_{i} v_{i+1} \ldots v_{n-3}$ has size $3k''+2$ for some $k' +k'' = k$.
\end{itemize}
\end{itemize}

If $i$ is a multiple of $3$ then we repeat the above procedure and rotate the ray $\overrightarrow{Dv_i}$ towards $v_{i+1}$. Notice that if the ray hits $v_{i+1} \neq D_0$ then $\overline{D v_{i+1}}$ is a good diagonal dissection (which is a contradiction to our main assumption). Thus the ray hits $v_j$ for $i+1 < j \leq v_{n-5}$. If $j$ is not a multiple of $3$ then one of the rays $\overrightarrow{v_{j+1} v_j}$ or $\overrightarrow{v_{j-1} v_j}$ contains a good dissection. Observe that those rays reach $\partial P'$ at $\overline{CD}$, $\overline{AA_0}$ or $\overline{v_i v_{i+1}}$.  Since we perform counting modulo $3$, those edges are considered to be identical in terms of vertices' indices. It means that we do not have to know where exactly $\overrightarrow{v_{j+1} v_j}$ or $\overrightarrow{v_{j-1} v_j}$ reach $\partial P'$ to decide which dissection to apply. That is, if $j \equiv 1$ mod $3$ then we use $\overrightarrow{v_{j+1} v_j}$; if $j \equiv 2$ mod $3$ then we use $\overrightarrow{v_{j-1} v_j}$.

If $j$ is a multiple of $3$ then the procedure is repeated again. Eventually, the ray spinning around $D$ must hit $D_0$. Recall that $D_0 =v_{n-4}$; $n-4 = 3k+2-4 = 3(k-1)+1$, which is not a multiple of $3$. At this point T{\'o}th comes to a contradiction and states that the angle of $P'$ at $D$ cannot be convex (refer to Claim $9$ in~\cite{Toth2000121}). However there is no contradiction. We show that the situation is possible and discuss how to find a good dissection in this case.

Let $v_z$ be the last vertex hit by the ray spinning around $D$ before it hit $D_0$. Notice that $z$ is a multiple of $3$. Two cases are possible:
\begin{enumerate}
\item $v_{z+1} \neq D_0$: notice that $D_0$ can see $v_z$. The size of the subpolygon $v_z v_{z+1} \ldots v_{n-4}$ is $3k' + 2$ for some integer $k'>1$ ($k$ is strictly bigger than $1$ because $v_{z+1} \neq D_0$). Therefore, the diagonal $\overline{v_z D_0}$ is a good diagonal dissection, meaning that this case is not possible.
\item $v_{z+1} = D_0$: in this case $z = n-5$ and the angle of $P'$ at $D_0$ is convex. Refer to Fig.~\ref{fig:TothShortLeaf_1011}. 
\begin{itemize}
\item[$\blacktriangleright$] If $v_z$ can see $A$ then dissect $P'$ along $\overline{v_z A}$. This dissection creates two subpolygons: $Av_zD_0DCB$ of size $6$ and $Av_1 v_2 \ldots v_z$ of size $n-4 = 3(k-1)+1$. Notice that the hexagon $Av_zD_0DCB$ has a non-empty kernel whose intersection with the boundary of $Av_zD_0DCB$ is $\overline{C'B'}$. One $180^\circ$-guard on $\overline{C'B'}$ can monitor $Av_zD_0DCB$. Similarly, for our problem, two distinct towers on $\overline{C'B'}$ can localise an agent in $Av_zD_0DCB$ (notice that $C' \neq B'$ and thus $\overline{C'B'}$ contains at least two distinct points).
\item[$\blacktriangleright$] If $v_z$ cannot see $A$ then consider the ray $\overrightarrow{D_0v_z}$. 
Let $Z$ be a point where $\overrightarrow{D_0v_z}$ reaches $\partial P$ or $\overline{AA'}$ (whichever happens first).
\begin{itemize}
\item If $Z \in \overline{AA'}$ (refer to Fig.~\ref{fig:TothShortLeaf_10}) then dissect $P'$ along the two line segments $\overline{AZ} \cup \overline{Zv_z}$. $P'$ falls into two subpolygons: $AZv_zD_0DCB$ of size $7$ (which is technically a hexagon since $Z$, $v_z$ and $D_0$ are collinear) and $ZAv_1 v_2 \ldots v_z$ of size $n-3 = 3(k-1)+2$. Similarly to the hexagon from the previous case, the heptagon $AZv_zD_0DCB$ can be guarded by one $180^\circ$-guard on $\overline{C'B'}$ and our agent can be localised by a pair of distinct towers positioned on $\overline{C'B'}$.
\item If $Z \notin \overline{AA'}$ (refer to Fig.~\ref{fig:TothShortLeaf_11}) then there must be a vertex $v_x$ to the left of $\overrightarrow{AA'}$ visible to $D_0$ and to $D$. Recall that for every vertex $v_x$ visible from $D$, the index $x$ is a multiple of $3$. 
Thus $\overline{v_xD_0}$ is a good diagonal dissection. It means that this case is not possible.
\end{itemize}
\end{itemize}
\end{enumerate}

\begin{figure}
\centering
\subfigure[]{%
		\label{fig:TothShortLeaf_10}%
		\includegraphics[width=0.4\textwidth]{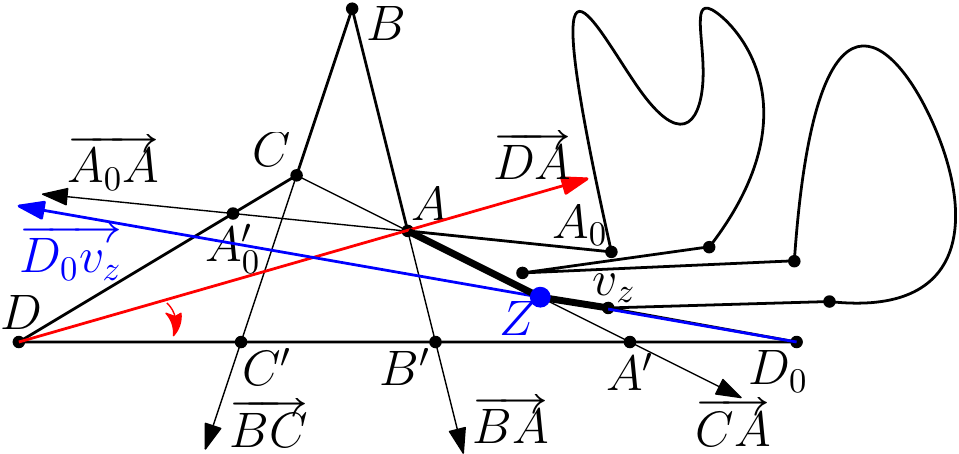}}
\hspace{0.1\textwidth}
\subfigure[]{%
		\label{fig:TothShortLeaf_11}%
		\includegraphics[width=0.4\textwidth]{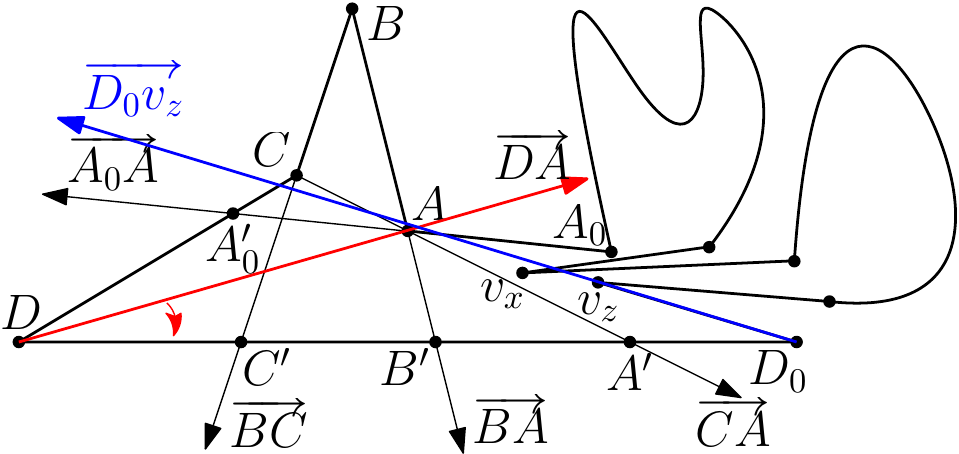}}
\caption{$\triangle ABC$ corresponds to a long leaf in $G(\mathcal{T})$; $\angle D_0 D C < \pi$; $A_0' \in \overline{CD}$. \textbf{(a)}Two line segments $\overline{AZ} \cup \overline{Zv_z}$ is a good dissection. \textbf{(b)} Impossible case, because $\overline{v_xD_0}$ is a good diagonal dissection.}
\label{fig:TothShortLeaf_1011}
\end{figure}

This completes the proof of T{\'o}th's Lemma~\ref{TothLem2}.

\subsection{Partition algorithm}
\label{subsec:PartitionAlg}
\pdfbookmark[2]{Partition algorithm}{subsec:PartitionAlg}

We are given a simple polygon $P$ in general position of size $n = 3k +q$ where $k$ is a positive integer and $q \in \{0, 1, 2\}$. If $P$ has at most $k$ reflex angles then $P$ can be partitioned simply by bisecting $k-1$ of the reflex angles. This creates $k$ star-shaped subpolygons of $P$ that can be watched by $2k$ towers. If the number of reflex angles of $P$ is bigger than $k$ then we look for a good diagonal dissection. In Section~\ref{subsec:PointKernel} we modified T{\'o}th's partition~\cite{Toth2000121} to deal with subpolygons of $P$ (polygons*) that are not in general position. The important difference to notice is that we avoided dissecting along diagonals of $P'$ that contain vertices of $P$ in their interior. If the dissection is unavoidable we showed how to position towers and in the worst case - repartition subpolygons of $P'$.

If $P'$ has no good diagonal dissection then its size is $n=3k+2$. In this case we look for a good dissection via the \emph{short} or \emph{long leaf} approach discussed in Section~\ref{subsubsec:Lemma2}. If no good cut is found then by T{\'o}th's Lemma~\ref{TothLem2} every leaf in $G(\mathcal{T})$ is associated to two convex angles of $P'$. It follows by T{\'o}th's Lemma~\ref{TothLem3} that $P'$ can be monitored by $\left \lfloor \frac{n}{3} \right \rfloor$ $180^\circ$-guards. Refer to Section~\ref{subsubsec:Lemma3} on how to adapt T{\'o}th's Lemma~\ref{TothLem3} for tower positioning. We refer to T{\'o}th's Lemma~\ref{TothLem1} to show the coherence of the algorithm. T{\'o}th's Lemma~\ref{TothLem1} states that if $P'$ has $n = 3k+2$ vertices and has no good diagonal or other dissection then $P'$ has at most $k$ reflex angles and thus $P'$ would be treated during the first step of the algorithm. 

The obtained partition together with the locations of $180^\circ$-guards is reused for tower positioning. Every $180^\circ$-guard that guards subpolygon $P'$ is positioned on the boundary of $P'$ (either on an edge or a convex vertex of $P'$) and oriented in such a way that $P'$ completely belongs to the half-plane $H_l$ monitored by the guard. In our problem every $180^\circ$-guard is replaced by a pair of towers $t_1$ and $t_2$ on the same edge of $P'$ in $\partial P' \cap kernel(P')$ and close to the $180^\circ$-guard. The orientation of $180^\circ$-guard is embedded into the tower coordinates via the \textbf{parity trick}. Specifically, given the partition of $P$ we can calculate a line segment $\overline{ab}$ suitable for tower positioning for each polygon $P'$ of the partition. Towers can be anywhere on $\overline{ab}$ as long as the distance between them is a rational number. Notice, that the coordinates of the endpoints of $\overline{ab}$ can be irrational numbers because $a$ and $b$ can be vertices of $P$ or $P'$ or intersection points between lines that support edges of $P$ or $P'$. 
However, we will show that it is always possible to to find the desired positions for the two towers: We place the tower $t_1$ at the point $a$. The tower $t_2$ will be placed at the point $c = a + \lambda (b-a)$ for an appropriate choice of $\lambda$. Let $s \geq 1$ be the smallest integer such that $1/3^s \leq d(a,b)$. If we take $\lambda = \frac{1}{3^{s+1} d(a,b)}$, then $d(a,c) = d(t_1,t_2) = \frac{1}{3^{s+1}}$, which is a reduced rational number whose numerator is odd. On the other hand, if we take $\lambda = \frac{2}{3^{s+1} d(a,b)}$, then $d(a,c) = d(t_1,t_2) = \frac{2}{3^{s+1}}$, which is a reduced rational number whose numerator is even. 
If we want the pair of towers $t_1$ and $t_2$ to be responsible for the half-plane $L(t_1, t_2)^+$ (respectively $L(t_1, t_2)^-$) we position the towers at $\frac{2}{3^{t+1}}$ (respectively $\frac{1}{3^{t+1}}$) distance from each other. 

Notice that $L(t_1, t_2)$ is not always parallel to the line that supports $H_l$. If the $180^\circ$-guard is positioned at a convex vertex $v$ of $P'$ then only one tower can be positioned at $v$. Another tower is placed on the edge adjacent to $v$ in $\partial P' \cap kernel(P')$.
If $kernel(P')$ is a single point then we position our towers outside of $P'$ and close to $kernel(P')$, such that $L(t_1, t_2)$ is parallel to the line that supports $H_l$. If $L(t_1, t_2)^+$ (respectively $L(t_1, t_2)^-$) contains $H_l$ then we position the towers at a distance, which is a rational number whose numerator is even (respectively odd).

Algorithm~\ref{alg_Partition} partitions $P$ and positions at most $\left \lfloor \frac{2n}{3} \right \rfloor$ towers that can localize an agent anywhere in $P$. The localization algorithm (Algorithm~\ref{alg_Locate}) can be found in Section~\ref{sec:localization}.

\begin{algorithm}
\caption{Polygon Partition; Tower Positioning}\label{alg_Partition}
\KwIn{$P'$ of size $n = 3k +q$, for positive integer $k$ and $q \in \{0, 1, 2\}$}
\KwOut{Set of at most $\left \lfloor \frac{2n}{3} \right \rfloor$ towers in $P'$}
\BlankLine
If $P'$ has at most $k$ reflex angles then position $2k$ towers by bisecting reflex angles. Halt\;
Simplify $P'$\;
\eIf{there exists a good diagonal dissection that contains at most one vertex of $P$ in its interior}{apply it and run this algorithm on $P_1$ and $P_2$}
(\tcp*[f]{$n = 3k +2$})
{
\eIf{there exist a good dissection via a continuation of an edge}{apply it; run this algorithm on $P_1$ and $P_2$}{
\eIf{there exists a good diagonal dissection}
{this dissection contains two vertices of $P$ in its interior\;
apply it and run this algorithm on $P_2$\;
\eIf{pentagon with a pair of vertices of $P$ in the interior of the same edge is created}{Repartition $P_2$ as described in Section~\ref{subsec:PointKernel}}{run this algorithm on $P_1$}
}
(\tcp*[f]{$n = 3k +2$ and $P'$ has no good diagonal dissection})
{
{\If{$P'$ has a good dissection via \emph{short} or \emph{long leaf} approach (refer to~\cite{Toth2000121} and Section~\ref{subsubsec:Lemma2})}{use it; repeat algorithm on $P_1$ and $P_2$}
  }
  }
}
}
\end{algorithm}

The running time of Algorithm~\ref{alg_Partition} is $O(n^3)$ because of the cases where repartitioning of already partitioned subpolygons is required.

In Section~\ref{sec:partition} we showed how to use the polygon partition method introduced by T{\'o}th~\cite{Toth2000121} for wider range of polygons by lifting his assumption that the partition method creates subpolygons whose vertices are in general position. We reproved T{\'o}th's results and showed how to use his partition method for localization problem. We showed how to compute a set $T$ of size at most $\lfloor 2n/3\rfloor$ that can localize an agent anywhere in $P$. The results of Section~\ref{sec:partition} are summarized in the following theorem.

\begin{theorem}
\label{thm:main_result}
Given a simple polygon $P$ in general position having a total of $n$ vertices. Algorithm~\ref{alg_Partition} computes a set $T$ of broadcast towers of cardinality at most $\lfloor 2n/3\rfloor$, such that any point interior to $P$ can localize itself.
\end{theorem}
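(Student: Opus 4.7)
The plan is a strong induction on $n$. For the base case $n \in \{3,4,5\}$, Lemma~\ref{lem:pentagon} supplies a line segment $\overline{uv} \subseteq kernel(P) \cap \partial P$ that is not a single point. I would place $t_1 = u$ and choose $t_2 \in \overline{uv}$ so that $d(t_1,t_2)$ is a reduced rational whose numerator parity matches the side of $L(t_1,t_2)$ containing $P$; this uses $2 \le \lfloor 2n/3 \rfloor$ towers and, by the parity trick, unambiguously localises every $p \in P$.

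For the inductive step on $P$ of size $n = 3k+q$, I would trace the branches of Algorithm~\ref{alg_Partition} and argue in each case that the set $T$ produced has cardinality at most $2\lfloor n/3 \rfloor \le \lfloor 2n/3 \rfloor$. In the first branch, when $P'$ has at most $k$ reflex angles, bisecting $k-1$ of them yields $k$ star-shaped subpolygons, each receiving a tower pair via the base-case construction, for a total of $2k$ towers. In every remaining branch, the algorithm produces a good dissection into $P_1$, $P_2$ with $\lfloor n_1/3 \rfloor + \lfloor n_2/3 \rfloor \le \lfloor n/3 \rfloor$; applying the inductive hypothesis to each subpolygon and summing yields at most $2(\lfloor n_1/3 \rfloor + \lfloor n_2/3 \rfloor) \le 2\lfloor n/3 \rfloor$ towers. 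When no good diagonal cut exists and $n = 3k+2$, the short-leaf/long-leaf analysis of Section~\ref{subsubsec:Lemma2} delivers a good (possibly two-segment) dissection; if this also fails, T\'oth's Lemma~\ref{TothLem1} forces $P'$ back into the first branch, and T\'oth's Lemma~\ref{TothLem3} (combined with the re-triangulation of Lemma~\ref{lem:TL3}) supplies a fan $P_1$ whose kernel contains a non-degenerate boundary segment on which we can again place two towers.

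The remaining task is to verify that each $180^\circ$-guard in T\'oth's partition can be legitimately replaced by a pair of towers that (i) see everything the guard sees and (ii) let the parity trick resolve the ambiguity along $L(t_1,t_2)$. When the guard sits on an edge or at a convex vertex with a non-degenerate kernel segment $\overline{ab}$, the explicit construction in Section~\ref{subsec:PartitionAlg}, which chooses $\lambda \in \{1/(3^{s+1} d(a,b)),\, 2/(3^{s+1} d(a,b))\}$ so that $d(t_1,t_2) \in \{1/3^{s+1},\, 2/3^{s+1}\}$, produces a tower pair with either parity while keeping $t_2 \in \overline{ab}$ and so handles the routine cases uniformly.

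I expect the main obstacle to be those branches where a cut through a vertex of $P$ collapses the kernel of a subpolygon to a single point, handled ad hoc in Sections~\ref{subsec:CaseStudy} and~\ref{subsec:PointKernel}: the tower pair must be relocated just outside the subpolygon into a neighbouring partition cell, and one must verify both that visibility is preserved across the shared boundary and that the parity trick still uniquely identifies $p$'s side. The delicate part is to show that this case analysis is exhaustive --- always yielding a legal tower pair with the correct parity --- and that the total count of $180^\circ$-guards across the recursion never exceeds $\lfloor n/3 \rfloor$, after which the elementary inequality $2\lfloor n/3 \rfloor \le \lfloor 2n/3 \rfloor$ delivers the desired bound.
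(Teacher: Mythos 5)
Your proposal follows essentially the same route as the paper, which offers no separate proof of Theorem~\ref{thm:main_result} but presents it as a summary of Section~\ref{sec:partition}: induction via good dissections satisfying $\left\lfloor \frac{n_1}{3}\right\rfloor + \left\lfloor \frac{n_2}{3}\right\rfloor \leq \left\lfloor \frac{n}{3}\right\rfloor$, base cases from Lemma~\ref{lem:pentagon}, replacement of each $180^\circ$-guard by a tower pair placed with the explicit $\lambda$-construction for the parity trick, and the ad hoc relocation of towers when a subpolygon's kernel degenerates to a point. The parts you flag as delicate (exhaustiveness of the point-kernel case analysis, preservation of visibility when towers move to a neighbouring cell) are exactly the content of Sections~\ref{subsec:CaseStudy}--\ref{subsec:Partition}, so your outline matches the paper's argument rather than replacing it.
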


\newpage
\subsection{Counterexample to T{\'o}th's conjecture}
\label{sec:counterexample}
\pdfbookmark[2]{Counterexample to T{\'o}th's conjecture}{sec:counterexample}

We refute the conjecture given by T{\'o}th in~\cite{Toth2000121}.

\vspace{10pt}
\textbf{Conjecture:} Any simple polygon of $n$ sides can be guarded by $\lfloor n/3\rfloor$ $180^\circ$-guards that are located exclusively on the boundary of the polygon.

Figure~\ref{fig:counterexample} shows a simple polygon $P$ with $n=8$ that can be guarded by $2$ general $180^\circ$-guards but requires at least $3$ $180^\circ$-guards that must reside on the boundary of $P$. Observe that $P$ is not a star-shaped polygon.
Figure~\ref{fig:counterexample11} shows a possible partition of $P$ into two star-shaped polygons: $v_1 v_2 v_3 v_4 v_5$ and $v_1 v_5 v_6 v_7 v_8$. 
\begin{figure}
\centering
\subfigure[]{%
		\label{fig:counterexample1}%
		\includegraphics[width=0.25\textwidth]{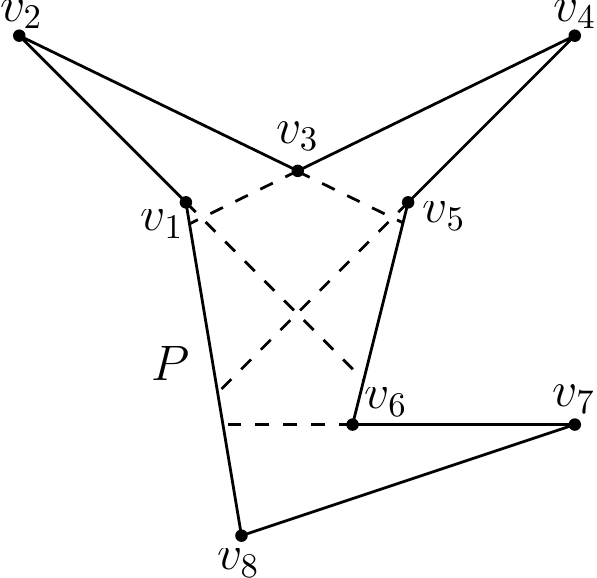}}
\hspace{0.1\textwidth}
\subfigure[]{%
		\label{fig:counterexample11}%
		\includegraphics[width=0.25\textwidth]{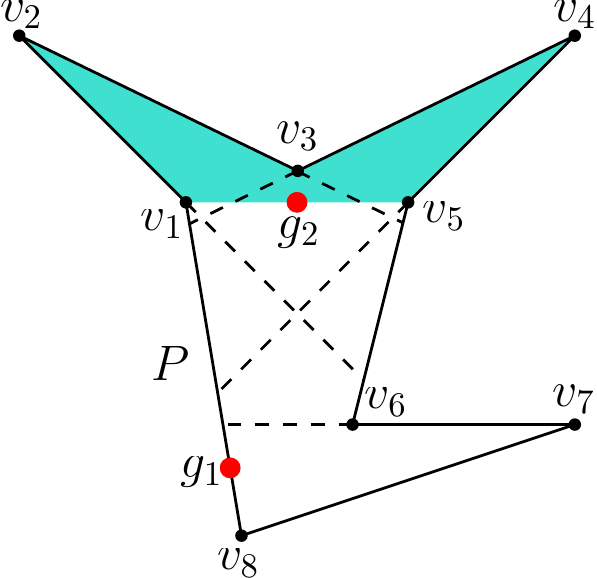}}
\hspace{0.1\textwidth}
\subfigure[]{%
		\label{fig:counterexample12}%
		\includegraphics[width=0.25\textwidth]{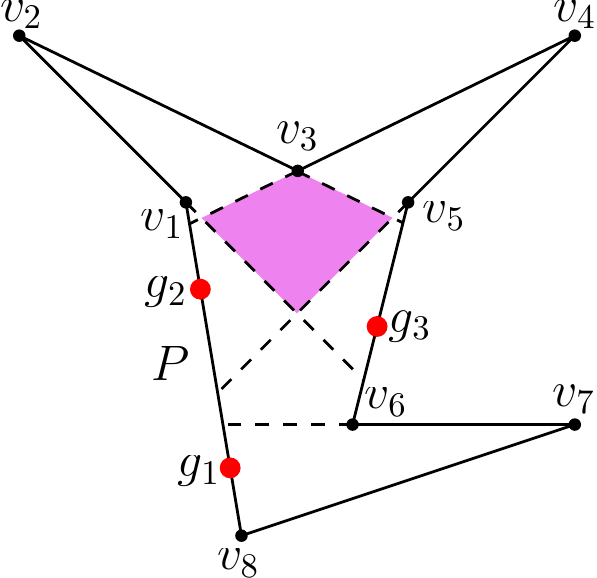}}
\caption{Counterexample on $n=8$ vertices. Guards are highlighted in red. \textbf{(b)} $P$ can be guarded by $2$ general $180^\circ$-guards\textbf{(c)} $P$ requires at least $3$ $180^\circ$-guards that must reside on the boundary of $P$.}
\label{fig:counterexample}
\end{figure}
The polygon $v_1 v_2 v_3 v_4 v_5$ can be guarded by the $180^\circ$-guard $g_2$ located on $\overline{v_1 v_5}$ and  oriented upwards (i.e. $g_2$ observes $L(v_1, v_5)^+$). The second $180^\circ$-guard $g_1$ is located on $\overline{v_1 v_8}$ in $kernel(v_1 v_5 v_6 v_7 v_8)$. It is  oriented to the right of $L(v_1, v_8)$ (i.e. $g_1$ observes $L(v_1, v_8)^-$) and thus guards $v_1 v_5 v_6 v_7 v_8$. Consider Figure~\ref{fig:counterexample12}. The visibility region, from where the complete interior of $v_1 v_2 v_3 v_4 v_5$ can be seen, is highlighted in magenta. We want to assign a single $180^\circ$-guard that can see both vertices $v_2$ and $v_4$ and be located on $\partial P$. Notice that the intersection of this visibility region with $\partial P$ contains a single point $v_3$. However, the angle of $P$ at $v_3$ is reflex and the guards have a restricted $180^\circ$ field of vision. Thus it is impossible to guard $v_1 v_2 v_3 v_4 v_5$ with a single $180^\circ$-guard located on $\partial P$. Notice that the visibility region of the vertex $v_7$ does not intersect with the visibility region of $v_2$ and the visibility region of $v_4$. Thus it requires an additional guard. It follows that $P$ requires at least $3$ $180^\circ$-guards located on $\partial P$. Notice that $\lfloor n/3\rfloor = \lfloor 8/3\rfloor = 2$. This is a contradiction to the above conjecture.

In general, consider the polygon shown in Fig.~\ref{fig:counterexample2}. It has $n = 5s + 2$ vertices, where $s$ is the number of \emph{double-spikes}. Each spike requires its own guard on the boundary of $P$ (two guards per double-spike), resulting in $2s$ boundary guards in total. This number is strictly bigger than $\lfloor n/3\rfloor = \lfloor 5s + 2/3\rfloor$ for $s \geq 3$.

\begin{figure}[h]
\centering
\includegraphics[width=0.8\textwidth]{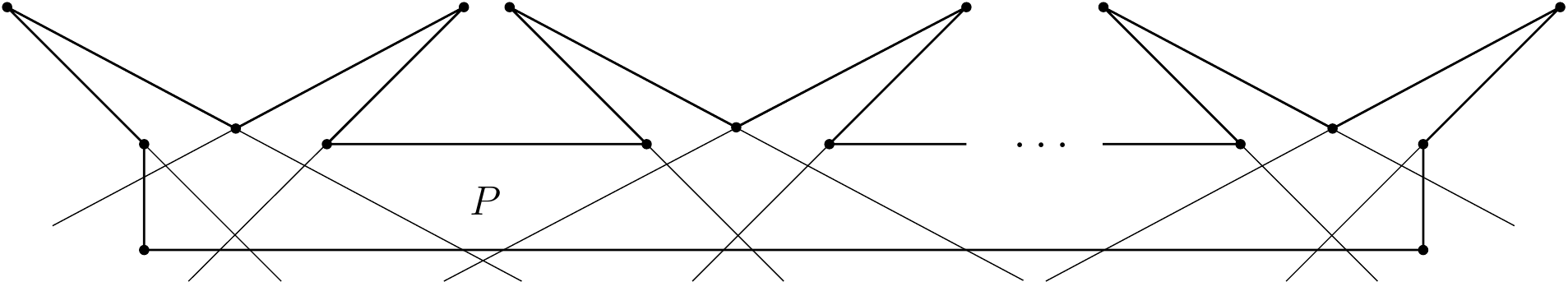}
\caption{Counterexample to T{\'o}th's conjecture. The polygon $P$ is in general position.}
\label{fig:counterexample2}
\end{figure}

\section{Localization Algorithm}
\label{sec:localization}
\pdfbookmark[1]{Localization Algorithm}{sec:localization}

In Section~\ref{sec:partition} we showed how to position at most $\left \lfloor \frac{2n}{3} \right \rfloor$ towers in a given polygon. We used a modification of T{\'o}th's partition method that dissects a polygon into at most $\lfloor n/3\rfloor$ star-shaped polygons each of which can be monitored by a pair of towers. In this section we show how we can localize an agent $p$ in the polygon. Our localization algorithm receives as input only the coordinates of the towers that can see $p$ together with their distances to $p$. In this sense, our algorithm uses the classical trilateration input. In addition, our algorithm knows that the parity trick was used to position the towers. Based on this information alone, and without any additional information about $P$, the agent can be localized. When only a pair of towers $t_1$ and $t_2$ can see the point $p \in P$ then the coordinates of the towers together with the distances $d(t_1,p)$ and $d(t_2,p)$ provide sufficient information to narrow the possible locations of $p$ down to two. Those two locations are reflections of each other over the line through $t_1$ and $t_2$. In this situation our localization algorithm uses the parity trick. It calculates the distance between the two towers and judging by the parity of the numerator of this rational number decides which of the two possible locations is the correct one. Refer to Algorithm~\ref{alg_Locate}.

\begin{algorithm}[h]
\caption{Compute the coordinates of point $p$.}\label{alg_Locate}
\KwIn{\begin{list}{}{}
\item $t_1$, \ldots, $t_k$ -- coordinates of the towers that see $p$.
\item $d_1$, \ldots, $d_{\ell}$ -- distances between the corresponding towers and $p$.
\end{list}}
\KwOut{coordinates of $p$.}
\BlankLine
\eIf{$\ell \geq 3$}{
	$p = C(t_1, d_1) \cap C(t_2, d_2) \cap C(t_3, d_3)$\;
}(\tcp*[f]{$\ell = 2$})
{
	\eIf{the numerator of $d(t_1,t_2)$ is even}{
	$p = C(t_1, d_1) \cap C(t_2, d_2) \cap L(t_1, t_2)^+$\;
	}
	{
	$p = C(t_1, d_1) \cap C(t_2, d_2) \cap L(t_1, t_2)^-$\;
	}
}
Return $p$\;
\end{algorithm}

\newpage
\section{Concluding Remarks}
\label{sec:conclusion}
\pdfbookmark[1]{Concluding Remarks}{sec:conclusion}

We presented a tower-positioning algorithm that computes a set of size at most $\lfloor 2n/3\rfloor$ towers, which improves the previous upper bound of $\lfloor 8n/9\rfloor$~\cite{DBLP:conf/cccg/DippelS15}. 
We strengthened the work~\cite{Toth2000121} by lifting the assumption that the polygon partition produces polygons whose vertices are in general position. We reproved T{\'o}th's result. We found and fixed mistakes in claims $2$ and $7$ in~\cite{Toth2000121}.

We believe it is possible to avoid the repartition step (described in Section~\ref{subsec:PointKernel}) and as a consequence bring the running time of Algorithm~\ref{alg_Partition} to $O(n^2)$ instead of $O(n^3)$.

As a topic for future research we would like to show that determining an optimal number of towers for polygon trilateration is NP-hard.

%
%

\bibliography{Trilateration}

\end{document}